\title[Efficient and Robust Cubic Regularized Newton Method]{Escaping Saddle Points in Distributed Newton's Method with Communication Efficiency and Byzantine Resilience}
\newtheorem{assumption}{Assumption}
\newtheorem{fact}{Fact}
\newcommand*\bH{\mathbf{H}}
\newcommand*\x{\mathbf{x}}
\newcommand*\s{\mathbf{s}}
\newcommand*\w{\mathbf{w}}
\newcommand*\y{\mathbf{y}}
\newcommand*\g{\mathbf{g}}
\newcommand*\bI{\mathbf{I}}
\newcommand*\cM{\mathcal{M}}
\newcommand*\cB{\mathcal{B}}
\newcommand*\cT{\mathcal{T}}
\newcommand*\cU{\mathcal{U}}
\titlespacing*{\subsection}{0pt}{0.25\baselineskip}{-0.1\baselineskip}
\titlespacing*{\paragraph}{0pt}{0.025\baselineskip}{0.5\baselineskip}
\author{%
 \Name{Avishek Ghosh} \Email{a2ghosh@ucsd.edu}\\
 \addr UCSD
 \AND
 \Name{Raj Kumar Maity} \Email{rajkmaity@cs.umass.edu}\\
 \addr UMass Amherst%
 \AND
 \Name{Arya Mazumdar} \Email{arya@ucsd.edu}\\
 \addr UCSD%
 \AND
 \Name{Kannan Ramachandran} \Email{kannanr@eecs.berkeley.edu}\\
 \addr UC Berkeley%
}
\begin{document}
\maketitle
\vspace{-8mm}
\begin{abstract}%
 The problem of saddle-point avoidance for non-convex optimization is quite challenging in large scale distributed learning frameworks, such as Federated Learning, especially in the presence of Byzantine workers. The celebrated cubic-regularized Newton method of  \cite{nest} is one of the most elegant ways to  avoid saddle-points in the standard centralized (non-distributed) setup. In this paper, we extend the cubic-regularized Newton method to a distributed framework and simultaneously address several practical challenges like communication bottleneck and Byzantine attacks. Note that the issue of saddle-point avoidance becomes more crucial in the presence of Byzantine machines since rogue machines may create \emph{fake local minima} near the saddle-points of the loss function, also known as the saddle-point attack. Being a second order algorithm, our iteration complexity is much lower than the first order counterparts. Furthermore we use compression (or sparsification) techniques like $\delta$-approximate compression for communication efficiency. We obtain theoretical guarantees for our proposed scheme under several settings including approximate (sub-sampled) gradients and Hessians. Moreover, we validate our theoretical findings with experiments using standard datasets and several types of Byzantine attacks, and obtain an improvement of $25\%$ with respect to first order methods in iteration complexity.

\end{abstract}

\begin{keywords}%
 Distributed optimization, Communication efficiency, robustness, compression.
\end{keywords}

\section{Introduction} \label{sec:intro}
\vspace{ -5 pt}
Motivated by the real-world applications such as recommendation systems, image recognition, and conversational AI, it has become crucial to implement learning algorithms in a distributed fashion. In a commonly used framework, namely data-parallelism, large data-sets are distributed among several worker machines for parallel processing. In many applications, like Federated Learning (FL) \cite{federated},
data is stored in user devices such as mobile phones and personal computers. In a standard distributed framework, several worker machines perform local computations and communicate to the center machine (a parameter server), and the center machine aggregates and broadcasts the information iteratively.

In this setting, it is well-known that one of the major challenges is to tackle the behavior of the Byzantine machines \cite{lamport}. This can happen owing to software or hardware crashes, poor communication link between the worker and the center machine, stalled computations, and even co-ordinated or malicious attacks by a third party. In this setup, it is generally assumed (see \cite{dong1,blanchard2017byzantine} that a subset of worker machines behave completely arbitrarily even in a way that depends on the algorithm used and the data on the other machines, thereby capturing the unpredictable nature of the errors.

Another critical challenge in this distributed setup is the communication cost between the worker and the center machine. The gains we obtain by parellelization of the task among several worker machines often get bottle-necked by the communication cost between the worker and the center machine. In applications like Federated learning, this communication cost is directly linked with the (internet) bandwidth of the users and thus resource constrained. It is well known that in-terms of the number of iterations, second order methods (like Newton and its variants) outperform their competitor; the first order gradient based methods. In this work, we  simultaneously handle the Byzantine and communication cost aspects of distributed optimization for non-convex functions.

In this paper, we focus on optimizing a non-convex loss function $f(.)$ in a distributed optimization framework. We have $m$ worker machines, out of which $\alpha$ fraction may behave in a Byzantine fashion, where $\alpha < \frac{1}{2}$. Optimizing a loss function in a distributed setup has gained a lot of attention in recent years \cite{alistrah,blanchard2017byzantine,feng,chen}. However, most of these approaches either work when $f(.)$ is convex, or provide weak guarantees in the non-convex case (for  example: zero gradient points, maybe a saddle point). 

In order to fit complex machine learning models, one often requires to find local minima of a non-convex loss $f(.)$, instead of critical points only, which may include several saddle points. Training deep neural networks and other high-capacity learning architectures \cite{soudry2016bad,ge_etal} are some of the examples where finding local minima is crucial. \cite{ge_etal,kenji} shows that the stationary points of these problems are in fact saddle points and far away from any local minimum, and hence designing efficient algorithm that escapes saddle points is of interest. Moreover,  \cite{jain2017global,sun_etal} argue that saddle points can lead to highly sub-optimal solutions in many problems of interest. This is amplified in high dimension as shown in \cite{dauphin}, and becomes the main bottleneck in training deep neural nets. Furthermore, a line of recent work \cite{sun_etal,bhojanapalli2016global,Sun_2017}, shows that for many non-convex problems, it is sufficient to find a local minimum. In fact, in many problems of interest, all local minima are global minima (e.g., dictionary learning \cite{Sun_2017}, phase retrieval \cite{sun_etal}, matrix sensing and completion \cite{bhojanapalli2016global,ge_etal}, and some of neural nets \cite{kenji}). Also, in \cite{choromanska2015loss}, it is argued that for  more general neural nets, the local minima are as good as global minima.

The issue of local minima convergence becomes non-trivial in the presence of Byzantine workers. Since we do not assume anything on the behavior of the Byzantine workers, it is certainly conceivable that by appropriately modifying their messages to the center, they can create \emph{fake local minima} that are close to the saddle point of the loss function $f(.)$, and these are far away from the true local minima. This is popularly known as the \emph{saddle-point attack} (see \cite{dong}), and it can arbitrarily destroy the performance of any non-robust learning algorithm. Hence, our goal is to design an algorithm that escapes saddle points of $f(.)$ in an efficient manner as well as resists the saddle-point attack simultaneously. The complexity of such an algorithm emerges from the the interplay between non-convexity of the loss function and the behavior of the Byzantine machines.

The problem of saddle point avoidance in the context of non-convex optimization has received considerable attention in the past few years. In the seminal paper of \cite{pgd}, a (first order) gradient descent based approach is proposed. A few papers \cite{neon,neon2} following the above use various modifications to obtain saddle point avoidance guarantees. A Byzantine robust first order saddle point avoidance algorithm is proposed by Yin et al. \cite{dong}, and probably is the closest to this work. In \cite{dong}, the authors propose a repeated check-and-escape type of first order gradient descent based algorithm. First of all, being a first order algorithm, the convergence rate is quite slow (the rate for gradient decay is $1/\sqrt{T}$, where $T$ is the number of iterations). Moreover, implementation-wise, the algorithm presented in \cite{dong} is computation heavy, and takes potentially many iterations between the center and the worker machines. Hence, this algorithm is not efficient in terms of the communication cost.

In this work, we consider a variation of the famous cubic-regularized Newton algorithm of Nesterov and Polyak  \cite{nest}, which efficiently escapes the saddle points of a non-convex function by appropriately choosing a regularization and thus pushing the Hessian towards a positive semi-definite matrix. The primary motivation behind this choice is the faster convergence rate compared to first order methods, which is crucial in terms of communication efficiency in applications like Federated Learning. Indeed, the rate of gradient decay is $\frac{1}{T^{2/3}}$.

We consider a distributed variant of the cubic regularized Newton algorithm. In this scheme, the center machine asks the workers to solve an auxiliary function and return the result. Note that the complexity of the problem is partially transferred to the worker machines. It is worth mentioning that in most distributed optimization paradigm, including Federated Learning, the workers posses sufficient compute power to handle this partial transfer of compute load, and in most cases, this is desirable \cite{federated}. The center machine aggregates the solution of the worker machines and takes a descent step. Note that, unlike gradient aggregation, the aggregation of the solutions of the local optimization problems is a highly non-linear operation. Hence, it is quite non-trivial to extend the centralized cubic regularized algorithm to a distributed one. The solution to the cubic regularization even lacks a closed form solution unlike the second order Hessian based update or the first order gradient based update. The analysis is carried out by leveraging the first order and  second order stationary conditions of the auxiliary function solved in each worker machines.

In addition to this, we simulateneously use (i) a $\delta$-approximate compressor (defined shortly) to compress the message send from workers to center to gain further communication reduction and (ii) a simple norm-based thresholding to robustify against adversarial attacks. Norm based thresholding is a standard trick for Byzantine resilience as featured in \cite{ghosh2020communication,ghosh2020distributed}. However, since the local optimization problem lacks a closed form solution, using norm-based trimming is also technical challenging in this case. We now list our contributions.

\subsection{Our Contributions}
We propose a novel distributed and robust cubic regularized Newton algorithm, that escapes saddle point efficiently. We prove that the algorithm convergence at a rate of $\frac{1}{T^{2/3}}$, which is faster than the first order methods (which converge at $1/\sqrt{T}$ rate, see \cite{dong}). Also, the rate matches to the centralized scheme of \cite{nest} and hence, we do not lose in terms of convergence rate while making the algorithm distributed. We emphasis that since the center machine aggregates solutions of local (auxiliary) loss functions, this extension is quite non-trivial and technically challenging. This fast convergence reduces the number of iterations (and hence the communication cost) required to achieve a target accuracy.

Along with the saddle point avoidance, we simultaneously address the issues of (i) communication efficiency and (ii) Byzantine resilience by using a $\delta$-approximate compressor and a  norm-based thresholding scheme respectively. A major technical challenge here is to simultaneously address the above mentioned issues, and it turns out that with a proper parameter choice (step size etc.) it is possible to carry out the analysis jointly.  

In Section~\ref{sec:experiments}, we verify our theoretical findings via experiments. We use benchmark LIBSVM (\cite{libsvm}) datasets for logistic regression and non-convex robust regression and show convergence results for both non-Byzantine and several different Byzantine attacks. Specifically, we characterize the total iteration complexity (defined in Section~\ref{sec:experiments}) of  our algorithm, and compare it with the first order methods. We observe that the algorithm of \cite{dong} requires $25\%$ more total iterations than ours.
\paragraph{Preliminaries:} A point $\x$ is said to satisfy the $\epsilon$-second order stationary condition of $f(.)$ if, 
\vspace{-2mm}
\begin{align*}
\|\nabla f(\x) \| \leq \epsilon \qquad \lambda_{\text{min}}(\nabla^2 f(\x) ) \geq -\sqrt{\epsilon}.
\vspace{-2mm}
\end{align*}
$\nabla f(\x)$ denotes the gradient of the function and $\lambda_{\text{min}}(\nabla^2 f(\x))$ denotes the minimum eigenvalue of the Hessian of the function. Hence, under the assumption (which is standard in the literature, see \cite{pgd,dong}) that all saddle points are strict (i.e., $\lambda_{\min}(\nabla^2 f(\x_s)) <0$ for any saddle point $\x_s$), all second order stationary points (with $\epsilon=0$) are local minima, and hence converging to a stationary point is equivalent to converging to a local minima.

\subsection{Problem Formulation}\label{sec:prob}
We minimize a loss function of the form: $f(\x)= \frac{1}{m}\sum_{i=1}^m f_i(\x)$, where the function $f : \mathbb{R}^d \rightarrow \mathbb{R}$ is twice differentiable and  non-convex.  In this work, we consider distributed optimization framework with $m$ worker machines and one center machine where the worker machines communicate to the center machine. Each worker machine is associated with a local loss function $f_i $. We assume that the data distribution is non-iid across workers, which is standard in frameworks like FL. In addition to this, we also consider the case where $\alpha$ fraction of the worker machines are Byzantine for some $\alpha < \frac12 $.  The Byzantine machines can send arbitrary updates to the central machine which can disrupt the learning. Furthermore, the Byzantine machines can collude with each other, create \emph{fake local minima} or attack  maliciously by gaining information about the learning algorithm and other workers. 
In the rest of the paper, the norm $\|\cdot\|$ will refer to $\ell_2$ norm or spectral norm when the argument is a vector or a matrix respectively. 

Next, we consider a generic class of compressors from \cite{errorfeed}:
\vspace{-2mm}
\begin{definition}[$\delta$-Approximate Compressor]
\label{def:compress}
An operator $Q(.): \mathbb{R}^d \rightarrow \mathbb{R}^d$ is defined as $\delta$ approximate compressor on a set $\mathcal{S} \subseteq \mathbb{R}^d$ if,  $\forall \, x \in \mathcal{S}$, $\|Q(x) -x \|^2 \leq (1-\delta) \|x\|^2$,
where $\delta \in (0,1]$ is the compression factor.
\end{definition} 
\vspace{-2mm}
\noindent Furthermore, a randomized operator $Q(.)$ is $\delta$-approximate compressor on a set $\mathcal{S} \subseteq \mathbb{R}^d$ if, 
the above holds on expectation. In this paper, for the clarity of exposition, we consider the deterministic form of the compressor (as in Definition~\ref{def:compress}). However, the results can be easily extended for randomized $Q(.)$. Notice that $\delta=1$ implies $Q(x)=x$ (no compression).

\section{Related Work}
\paragraph*{Saddle Point avoidance algorithms:}
In the recent years, there are handful first order algorithms \cite{lee2016gradient,lee2017first,du2017gradient} that focus on the escaping saddle points and convergence to local minima. The critical algorithmic aspect is  running gradient based algorithm and adding perturbation to the iterates when the gradient is small. ByzantinePGD \cite{dong}, PGD \cite{pgd}, Neon+GD\cite{neon}, Neon2+GD \cite{neon2} are examples of such algorithms. The work of Nesterov and Polyak \cite{nest} first proposes the cubic regularized second order Newton method and provides analysis for the second order stationary condition.  An algorithm called Adaptive Regularization with Cubics (ARC) was developed by \cite{cartis,cartis1}  where cubic regularized Newton method with access to inexact Hessian was studied.  Cubic regularization with both the gradient and Hessian being inexact was studied in \cite{stoch}. In \cite{subsample}, a  cubic regularized Newton with  sub-sampled Hessian and gradient was proposed and analyzed. Momentum based cubic regularized algorithm was studied in \cite{momentcubic}. A variance reduced cubic regularized algorithm was proposed in \cite{zhou2018stochastic,wang2019stochastic}. In terms of solving the cubic sub-problem, \cite{carmon} proposes a gradient based algorithm  and \cite{agarwal2017finding} provides a Hessian-vector product technique.

\paragraph*{Compression: } In the recent years, several gradient quantization or sparsification schemes have been studied in \cite{vqsgd, alistarh2017communication,alistarh2018convergence,atomo, terngrad, qsgd, ivkin2019communication}. In \cite{errorfeed}, the authors introduced the idea of $\delta$-approximate compressor. In \cite{ghosh2020comm}, the authors used $\delta$-approximate compressor to sparsify the second order update.  
 
\paragraph*{Byzantine resilience:} 
  The effect of  adversaries on convergence of non-convex optimization was studied in \cite{Damaskinos,mhamdi2018hidden}. In the distributed learning context, \cite{feng} proposes  one shot median based robust learning.  A median of mean based algorithm was proposed in  \cite{chen} where the worker machines are grouped in batches and the Byzantine resilience is achieved by computing the median of the grouped machines. Later \cite{dong1} proposes co-ordinate wise  median, trimmed mean and iterative filtering  based approaches. Communication-efficient and Byzantine robust algorithms were developed in \cite{anima,ghosh2020communication}.  A norm based thresholding approach for Byzantine resilience for distributed Newton algorithm was also developed~\cite{ghosh2020distributed}. All these works provide only first order convergence guarantee (small gradient). The work \cite{dong} is the only one that  provides second order guarantee (Hessian positive semi-definite) under Byzantine attack. 

\begin{algorithm}[h!]
  \caption{Byzantine Robust Distributed Cubic Regularized Newton Algorithm}
  \begin{algorithmic}[1]
 \STATE  \textbf{Input:} Step size $\eta_k$, parameter $0\leq \alpha \le \beta, \gamma>0,M>0$ and $\delta$-approximate  compressor $Q$. 
 \STATE \textbf{Initialize:} Initial iterate $\x_0 \in \mathbb{R}^d$ \\
  \FOR{$k=0,1, \ldots, T-1 $}
\STATE \underline{Central machine:} broadcasts $\x_k$  \\

  \textbf{ for $ i \in [m]$ do in parallel}\\
  \STATE \underline{$i$-th worker machine:} \\
    \textit{ Non-Byzantine:} Compute local gradient $\g_{i,k}$  and Hessian $\bH_{i,k}$; locally solve the problem equation~\eqref{eq:sub}. Use the compressor $Q$  and  send $Q(\s_{i,k+1})$ to the center,\\
     \textit{ Byzantine}: Generate $\star$ (arbitrary), and send it to the center machine \\
    \textbf{end for}
\STATE \underline{Center Machine:} \\
    (i) Sort the worker machines in a non decreasing order according to norm of updates $\{ Q(\s_{i,k+1})\}_{i=1}^m$ from the local machines \\ 
     (ii) Return the indices of the first $1-\beta$ fraction of machines as $\mathcal{U}_t$, \\
     (iii) Update  parameter: $ \x_{k+1}= \x_k + \eta_k\frac{1}{|\mathcal{U}_t|}\sum_{i\in \mathcal{U}_t} Q(\s_{i,k+1})$
  \ENDFOR
  \end{algorithmic}\label{alg:main_algo}
\end{algorithm}

\section{Compression, Byzantine Resilience and Distributed Cubic Regularized Newton}

In this section, we describe a communication efficient and Byzantine robust distributed cubic Newton algorithm that can avoid saddle point by meeting the condition second order stationary point and thus converge to a local minima for non-convex loss function. Starting with initialization $\x_0$, the center machine broadcasts the parameter to the worker machines. At $k$-th iteration, the $i$-th worker machine  solves a cubic-regularized auxiliary loss function based on its local data:
\vspace{-1mm}
\begin{align}
 \s_{i,k+1}= \arg\min_{\s} \g_{i,k}^T \s + \frac{\gamma}{2}\s^T\bH_{i,k}\s + \frac{M}{6}\gamma^2\|\s\|^3, \label{eq:sub}
 \end{align} 
 \vspace{-1mm}
 where $M>0 , \gamma>0$ are parameter choose suitably  and $\g_{i,t},\bH_{i,t}$ are the gradient and Hessian of the local loss function $f_i$ computed on  data $(S_i)$ stored in the worker machine.
\begin{align*}
 \g_{i,k}  =\nabla f_i(x_k)= \frac{1}{|S_i|}\sum_{z_i \in S_i }\nabla f_i(x_k,z_i), \,
 \bH_{i,k}  =\nabla^2 f_i(x_k) = \frac{1}{|S_i|}\sum_{z_i \in S_i }\nabla^2 f_i(x_k,z_i).
\end{align*} 
After solving the problem described in \eqref{eq:sub}, each worker machine applies compression operator $Q$  as defined in Definition~\ref{def:compress}  on update $\s_{i,k+1}$. The application  of the compression on the update is to minimize the communication cost. 

Moreover, we also consider that  $\alpha (< \frac12)$ fraction  of the worker machines are Byzantine in nature. We denote the set of Byzantine worker machines by $\cB$  and the set of the rest of the good machines as $\cM$. In each iteration, the good machines send the  compressed update of solution of  the sub-cubic problem described in equation ~\eqref{eq:sub} and the Byzantine machines can send any arbitrary values or intentionally disrupt the learning algorithm with malicious updates. Moreover, in the non-convex optimization  problems, one of the more complicated and important issue is to avoid saddle points which can yield highly sub-optimal results. In the presence of Byzantine worker machines, they  can be in cohort to create a \emph{fake local minima} and drive the algorithm into sub-optimal region. Lack of any robust measure towards these type of intentional and unintentional attacks can be catastrophic to the learning procedure as the learning algorithm can get stuck in such sub-optimal point. To tackle such Byzantine worker machines,  we employ a simple process called \emph{norm based thresholding}.

After receiving all the updates from the worker machines, the central machine outputs a set $\cU $ which consists of the indexes of the worker machines with smallest norm. We choose the size of the set $\cU$ to be $(1-\beta)m$. Hence, we `trim' $\beta$ fraction of the worker machine so that we can control the iterated update by not letting the worker machines with large norm participate  and diverge the learning process. We denote the set of trimmed machine as $\cT$. We choose $\beta>\alpha$ so that at least one of the good machines gets trimmed. In this way, the norm of the all the updates in the set $\cU $ is bounded by at least the largest norm of the  good machines.

Notice that the $\delta$-approximate compressor is used to minimize the cost of communication and the norm based thresholding is used to mitigate the effect of the Byzantine machines.   The central machine updates the parameter, with step-size $\eta_k$ as $
\x_{k+1}= \x_k + \eta_k\frac{1}{|\mathcal{U}_t|}\sum_{i\in \mathcal{U}_t} Q(\s_{i,k+1}) $.
\begin{remark}\label{rmk:lam}
Note that, we introduce the parameter $\gamma$ in the cubic regularized sub-problem. The parameter $\gamma$ emphasizes the effect of the second and third order terms in the sub-problem. The choice of  $\gamma$ plays an important role in our analysis in handling the non-linear update from different worker machines. Such non-linearity vanishes if we choose $\gamma=0$.
\end{remark}
\subsection{Theoretical Guarantees}
We have the following standard assumptions:
\vspace{-3mm}
\begin{assumption}\label{asm:fun}
The non-convex loss function $f(.)$ is  twice continuously-differentiable and bounded below, i.e., $f^* = \inf_{\x \in \mathbb{R}^d} f(x)> -\infty $.
\end{assumption}
\vspace{-4mm}
\begin{assumption} \label{asm:lip}
The loss $f(.)$ is \emph{$L$-Lipschitz continuous} ($ \forall \x,\y$, $\left| f(\x)-f(\y) \right| \leq L\| \x -\y\|$), has \emph{$L_1$-Lipschitz gradients }    
   $(\left\| \nabla f(\x)- \nabla f(\y) \right\| \leq L_1\| \x -\y\|)$ and \emph{$L_2$-Lipschitz Hessian }  
  
  $(\left\| \nabla ^2 f(\x)- \nabla^2 f(\y) \right\|   \leq L_2\| \x -\y\|)$.
\end{assumption}
The above assumption states that the loss and the gradient and Hessian of the loss do not drastically change in the local neighborhood. These assumptions are standard in the analysis of the saddle point escape for cubic regularization (see \cite{stoch,subsample,nest,carmon}).

We assume the data distribution across workers to be non-iid. However, we assume that the local gradient and Hessian computed at worker machines (using local data) satisfies the following gradient and Hessian dissimilarity conditions. Note that these conditions are only applicable for non-Byzantine machines only. Byzantine machines do not adhere to any assumptions.
\begin{assumption} (Gradient dissimilarity) \label{asm:gradinexec}
For $\epsilon_g >0 $, we have $\| \nabla f(\x_k) - \g_{i,k}\| \leq \epsilon_g$ for all $k,i$.
\end{assumption}
\vspace{-4mm}
\begin{assumption}(Hessian dissimilarity) \label{asm:hessinexec}
For $\epsilon_H >0 $, we have $\| \nabla^2 f(\x_k) - \bH_{i,k}\| \leq \epsilon_H$ for all $k,i$.
\end{assumption}
\vspace{-2mm}
 Similar assumptions featured in previous literature. For example, in  \cite{karimireddy2020scaffold,fallah2020personalized}, the authors use similar kind of dissimilarity assumptions that are prevalent in the \emph{Federated learning} setup to highlight the \emph{non-iid} or heterogeneity  of the data among users.
\begin{remark}\label{rmk:iiddata}
[Values of $\epsilon_g$ and $\epsilon_H$ in special cases] Compared to the Assumptions ~\ref{asm:gradinexec}  and ~\ref{asm:hessinexec}, the gradient and Hessian bound have been studied under more relaxed condition. In \cite{subsample,stoch,momentcubic}, the authors consider gradient and Hessian  with sub-sampled data being drawn uniformly randomly from the data set. Due to the data being drawn in iid manner, both the bound $(\epsilon_g, \epsilon_H)$ parameters value  diminish at the rate $\propto 1/ \sqrt{|S|}$ where $|S|$ is the size of the data sample in each worker machine.  In \cite{ghosh2020distributed}, the authors analyze the deviation in case of \emph{data partitioning} where each worker node sample data uniformly \emph{without replacement} from a given data set.  
\end{remark}

\begin{theorem}\label{thm:oneround}
Suppose $0\leq \alpha \le \beta \leq \frac 12$ and Assumptions ~\ref{asm:fun}-\ref{asm:hessinexec} hold, and we choose  $M = \mathcal{O}( m (1+\sqrt{1-\delta})^3)$, $\eta = \gamma = c/T$. Then, after $T$ iterations of Algorithm~\ref{alg:main_algo}, the sequence $\{\x_i\}_{i=1}^T$ generated contains a point $\tilde{x}$ such that \begin{align}
\| \nabla f(\tilde{x})\| \leq &\frac{\chi_1}{T^{2/3}} + \epsilon_g + \mathcal{O}(1/T), \quad
 \lambda_{\min}\left(\nabla^2 f(\tilde{x}) \right) \geq  -\frac{\chi_2}{T^{\frac{1}{3}}} -\epsilon_H -\mathcal{O}(1/T), \\
\text{ where,    }\quad   \chi_1 &= \mathcal{O}([ \frac{(1-\alpha)(1+\sqrt{1-\delta})^2}{2(1-\beta)} +m (1+\sqrt{1-\delta})^3]) \nonumber  \\  \chi_2 &=\mathcal{O}([m (1+\sqrt{1-\delta})^3 +  \frac{(1+\sqrt{1-\delta})(1-\alpha)}{(1-\beta)}]).\nonumber\end{align}
\end{theorem}
\vspace{-4mm}
\begin{remark}\label{rmk:steps}
Both the gradient and the minimum eigenvalue of the Hessian in the Theorem~\ref{thm:oneround} have two parts. The first part decreases with the number iterations $T$. The gradient and the minimum eigenvalue of the Hessian have the rate of $O(1/T^{\frac23})$ and  $O(1/T^{\frac13})$, respectively. Both of these rates match the rates of the centralized version of the cubic regularized Newton. In the second parts of  the gradient bound and the minimum eigenvalue of the Hessian  have terms with $\epsilon_g,\epsilon_H$  factor. As mentioned above (see Remark   ~\ref{rmk:iiddata} ), in the special cases, both the terms $\epsilon_g$ and $\epsilon_H$ decrease at the rate of $1/\sqrt{|S|}$, where $|S|$ is the number of data in each of the worker machines.
\end{remark}
\begin{remark}\label{rmk:dong}
 [Comparison with \cite{dong}] In a recent work, \cite{dong} provides a \emph{perturbed gradient based algorithm } to escape the saddle point in  non-convex optimization in the presence of Byzantine worker machines. Also, in that paper, the Byzantine resilience   is achieved using  techniques such as trimmed mean, median and collaborative filtering. These methods require additional  assumptions (coordinate of the gradient being sub-exponential etc.) for the purpose of analysis. In this work, we do not require such assumptions. Moreover, we perform a simple \emph{norm based thresholding } that provides robustness. Also the perturbed gradient descent (PGD) actually requires multiple rounds of communications between the central machine and the worker machines whenever the norm of the gradient  is small as this is an indication of either a local minima or a saddle point. In contrast to that, our method does not require any additional communication for \emph{escaping} the saddle points. Our method provides such ability by virtue of cubic regularization.
\end{remark}
\begin{remark} 
Since our algorithm is second order in nature, it requires less number of iterations compared to the first order gradient based algorithms. Our algorithm achieves a superior  rate of $O(1/T^{\frac23})$ compared to  the gradient based approach of rate $O(1/\sqrt{T})$. Our algorithm dominates ByzantinePGD \cite{dong} in terms of convergence, communication rounds and simplicity. 
\end{remark}
\vspace{-2mm}
 We now state two corollaries.  First, we relax the condition of compression by choosing $\delta=1$.  The worker machines communicate the actual update instead of the compressed the update.
\begin{corollary}[No compression]
Suppose $0\leq \alpha \le \beta \leq \frac 12$ and Assumptions ~\ref{asm:fun}-\ref{asm:hessinexec} hold, and we choose  $M = \mathcal{O}( m )$, $\eta = \gamma = c/T$. Then, after $T$ iterations of Algorithm~\ref{alg:main_algo} for uncompressed update $(\delta=1)$, the sequence $\{\x_i\}_{i=1}^T$ generated contains a point $\tilde{x}$ such that \begin{align}
\| \nabla f(\tilde{x})\| \leq \frac{\chi_1}{T^{2/3}} + \epsilon_g + \mathcal{O}(1/T), \quad
 \lambda_{\min}\left(\nabla^2 f(\tilde{x}) \right) \geq  -\frac{\chi_2}{T^{\frac{1}{3}}} -\epsilon_H -\mathcal{O}(1/T), 
\end{align}
where, $\chi_1 =\chi_2= \mathcal{O}([ \frac{(1-\alpha)}{(1-\beta)} +m ]).$
\end{corollary}
 Next, we choose the non-Byzantine setup with $\alpha=\beta=0$ in addition to the uncompressed update. This is just the distributed variant of the cubic regularized Newton method.
\begin{corollary} [Non Byzantine and no compression]\label{cor:nonbyz}
Suppose Assumptions ~\ref{asm:fun}-\ref{asm:hessinexec} hold, and we choose  $M = \mathcal{O}( m )$, $\eta = \gamma = c/T$. Then, after $T$ iterations of Algorithm~\ref{alg:main_algo} for uncompressed update $(\delta=1)$, the sequence $\{\x_i\}_{i=1}^T$ generated contains a point $\tilde{x}$ such that \begin{align}
\| \nabla f(\tilde{x})\| \leq \frac{\chi_1}{T^{2/3}} + \epsilon_g + \mathcal{O}(1/T), \quad
 \lambda_{\min}\left(\nabla^2 f(\tilde{x}) \right) \geq  -\frac{\chi_2}{T^{\frac{1}{3}}} -\epsilon_H -\mathcal{O}(1/T), 
\end{align}
where, $\chi_1 =\chi_2= \mathcal{O}(m).$
\end{corollary}

\begin{remark}[Two rounds of communication $\epsilon_g=0$]\label{rmk:tworound}
We can improve the bound in the Corollary ~\ref{cor:nonbyz}, with the calculation of the actual gradient which requires one more round of communication in each iteration. In the first iteration, all the worker machines compute the gradient based on the stored data and send it to the center machine. The center machine averages them and then broadcast the global gradient $\nabla f(\x_k)= \frac{1}{m}\sum_{i=1}^m \g_{i,k} $ at iteration $k$. In this manner, the worker machines solve the sub-problem \eqref{eq:sub} with  the actual gradient.  This improves the gradient bound while the communication  remains $O(d)$ in each iteration.
\end{remark}
\vspace{-2mm}
\subsection{Solution of the cubic sub-problem}
The cubic regularized sub-problem ~\eqref{eq:sub} needs to be  solved to update the parameter. As this particular problem does not have a closed form solution, a solver is usually employed which yields a satisfactory  solution. In previous works, different types of solvers have been used. \cite{cartis,cartis1} solve the sub-problem using Lanczos based method in Krylov subspace. In  \cite{agarwal2017finding}, the authors propose a solver based on  Hessian-vector product and binary search. Gradient descent based solver is proposed in \cite{carmon,stoch}.

Previous works, \cite{momentcubic,zhou2018stochastic,wang2019stochastic}, consider the exact solution of the cubic sub-problem for theoretical analysis. Recently, inexact solutions to the sub-problem is also proposed in the centralized (non-distributed) framework. For instance, \cite{subsample} analyzes the cubic model with sub-sampled Hessian with approximate model minimization technique developed in \cite{cartis}. Moreover, \cite{stoch} shows improved analysis with gradient based minimization which is a variant studied in \cite{carmon}. Both exact and inexact solutions to the sub-problem yields similar theoretical guarantees.

In our framework, each worker machine is tasked with solving the sub-problem. For the purpose of theoretical convergence analysis, we consider that woker machines obtain the exact solution in each round. However, in experiments (Section ~\ref{sec:experiments}), we apply the gradient based solver of \cite{stoch} to solve the sub-problem. Here, we let each worker machines run the gradient based solver for $10$ iterations and send the update to the center machine in each iteration. 

\section{Experimental Results}
\label{sec:experiments}
First we show that our algorithm indeed \emph{escapes saddle point} with a toy example. We choose a  $2$ dimensional  example: $\min_{w \in \mathbb{R}^2} [ f_1(w)+f_2(w)]$ where $f_1(w)=w_1^2-w_2^2 $ and $f_2(w)=2w_1^2-2w_2^2$ (Here $w_i^2$ denotes the $i$-th coordinate of $w^2$. This problem is the sum of two non-convex function and  has a saddle point at $(0,0)$. In Figure ~\ref{fig:saddle} (left most)  we observe  that our algorithm escapes the saddle point $(0,0)$, with random initialization.

Next, we validate our algorithm  on benchmark LIBSVM (\cite{libsvm}) data-set in both convex and non-convex problems. We choose the following loss functions: (a) Logistic loss:$
\min_{\w \in \mathbb{R}^d} \frac{1}{n}\sum_i \log \left(1 + \exp(-y_i\x_i^T\w) \right) + \frac{\lambda}{2n}\|\w\|^2$, and (b) Non-convex robust linear regression:
$
\min_{\w \in \mathbb{R}^d} \frac{1}{n}\sum_i \log \left( \frac{(y_i -\w^T\x_i)^2}{2} +1 \right)$, where $\w \in \mathbb{R}^d$ is the parameter, $\{\x_i\}_{i=1}^n \in \mathbb{R}^d$ are  the feature vectors and $\{\y_i\}_{i=1}^n \in \{0,1\}$ are the corresponding labels. We choose `a9a'($d=123,n \approx 32K$, we split the data into $70/30$ and use as training/testing purpose) and `w8a'(training data $d=300,n \approx 50K$ and testing data $d=300,n \approx 15K$ ) classification datasets and partition the data in 20 different worker machines. 

First, we demonstrate Algorithm~\ref{alg:main_algo} in the presence of Byzantine machines and compressed update. For compression, each worker applies compression operator of QSGD \cite{qsgd}. For a given vector $\x \in \mathbb{R}^d, [Q(\x)]_i=\|\x\|_2 \text{sign}(\x_i)\times \text{Ber}(|\x_i|/\|x\|_2)$ for all $i \in [d]$. In this work, we consider the following four Byzantine attacks: 
(1) `Gaussian Noise attack': where the Byzantine worker machines add Gaussian noise to the update.
 (2) `Random label attack': where the Byzantine worker machines train and learn based on random labels instead of the proper labels.
  (3) `Flipped label attack': where (for Binary classification) the Byzantine worker machines flip the labels of the data and learn based on wrong labels.
  (4) `Negative update attack': where the Byzantine workers computes the update $\s$ (here solves the sub-problem in Eq.~\eqref{eq:sub}) and communicates $-c*\s$ with $c\in (0,1)$ making the direction of the update opposite of the actual one.
In Figure ~\ref{fig:comprobustbyz}, we plot the function value of the robust linear regression problem for 'flipped labels` and 'negative update` attacks with compressed update for both `w8a' and `a9a' dataset. We choose the parameters $\lambda=1, M=10$, learning rate $\eta_k=1$, $\alpha=\{.1,.15,.2\}$ and $\beta= \alpha + \frac{2}{m}$, where number of worker machines $m=20$. The results with Gaussian and random labels attacks are shown in Appendix.
\begin{figure}[t!]%
\centering
\subfigure[]{
\includegraphics[height = 3.2cm,width=3.5cm]{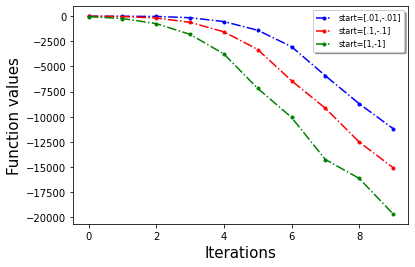}}%
\subfigure[]{
\includegraphics[height = 3.2cm,width=3.5cm]{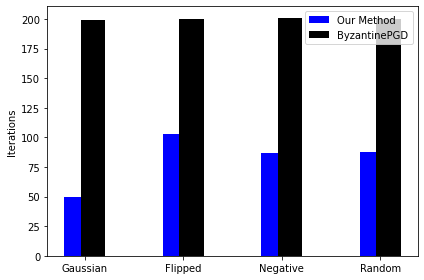}}%
\subfigure[]{
\includegraphics[height = 3.2cm,width=3.5cm]{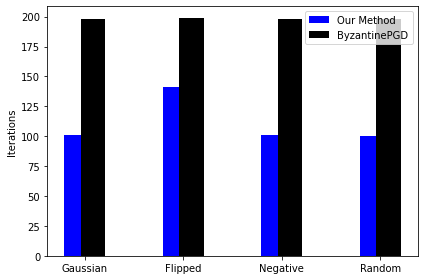}}%
\subfigure[]{
\includegraphics[height = 3.2cm,width=3.5cm]{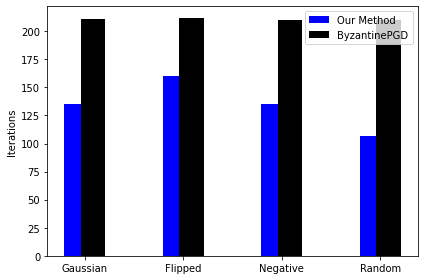}}%
\vspace{-3mm}
\caption{  (a) Plot of the function value with different initialization to show that the algorithm escapes the saddle point with functional value $0$. Comparison of our algorithm with ByzantinePGD \cite{dong} in terms of  the total number of iterations.}
\label{fig:saddle}
\vspace{-9mm}
\end{figure}
\begin{figure}[h]%
\vspace{-1mm}
\centering
\subfigure[]{
\includegraphics[height = 3.2cm,width=3.5cm]{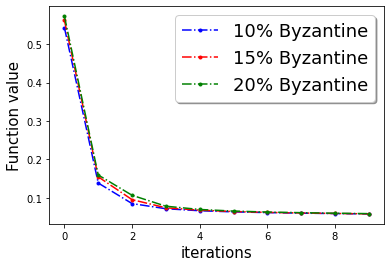}}%
\subfigure[]{
\includegraphics[height = 3.2cm,width=3.5cm]{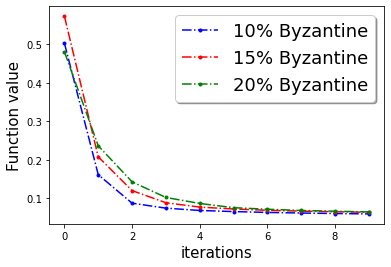}}%
\subfigure[]{
\includegraphics[height = 3.2cm,width=3.5cm]{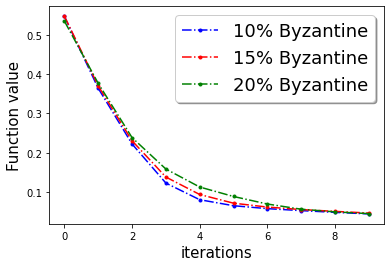}}%
\subfigure[]{
\includegraphics[height = 3.2cm,width=3.5cm]{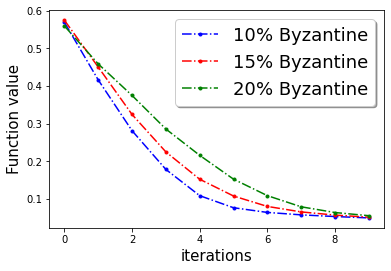}}%
\vspace{-4mm}
\caption{Loss  of the training data `a9a' (first row) and `w8a' (second row) with $10\%,15\%,20\% $ Byzantine worker machines for (a,c). Flipped label attack.(b,d). Negative Update attack.}%
    \label{fig:comprobustbyz}%
    \vspace{-8mm}
\end{figure}
\paragraph{Comparison with \cite{dong}}: We compare our uncompressed version of the algorithm $(\delta=1)$  with ByzantinePGD of \cite{dong}. We take the \emph{total number of iterations} as a comparison metric. One outer iteration of Algorithm~\ref{alg:main_algo} corresponds to one round of communication between the center and the worker machines (and hence one parameter update). Note that in our algorithm the worker machines use 10 steps of gradient solver (see \cite{stoch}) for the local sub problem per iteration. So, the \emph{total number of iterations} is given by $10$ times the number of outer iterations.For both the algorithms, we choose $\ell_2$ norm of the gradient as a stopping criteria. For ByzantinePGD, we choose  $R=10,r=5,Q=10,T_{th}=10$ and `co-ordinate wise Trimmed mean. In the Figure~\ref{fig:saddle}, we plot the \emph{total  number of iterations} in all four types of attacks with different fraction of Byzantine machines. It is evident from the plot that our method requires less number of over all iterations (at least $48.4\%$, $29\%$ and $25\%$  less for $10\%$, $15\%$  and $20\%$ of Byzantine machines respectively). We provide the  results for the non-Byzantine case in the Appendix.


\newpage
\bibliography{icmlref}

\newpage
\section{Appendix}\label{sec:app}

In this part, we establish some useful facts and lemmas. Next,
we provide  analysis of  Theorems~\ref{thm:oneround}. 

\subsection{Some useful facts}
For the purpose of analysis we use the following sets of inequalities.

\begin{fact}
 For $a_1,\ldots ,a_n$ we have the following inequality 
    \begin{align}
      \| \left( \sum_{i=1}^n a_i \right)\|^3 & \leq \left( \sum_{i=1}^n \|a_i\| \right)^3 \leq n^2\sum_{i=1}^n \|a_i \|^3 \label{eq:l3}\\
     \| \left( \sum_{i=1}^n a_i \right)\|^2  & \leq \left( \sum_{i=1}^n \|a_i\| \right)^2 \leq n\sum_{i=1}^n \|a_i \|^2 \label{eq:l2}
    \end{align}
\end{fact}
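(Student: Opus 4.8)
The plan is to prove each of the two displayed chains by splitting it into its left inequality and its right inequality, handled by two completely standard tools: the left inequality comes from the triangle inequality for $\|\cdot\|$ together with monotonicity of the power map $t\mapsto t^p$ on $[0,\infty)$, and the right inequality is the power–mean inequality applied to the nonnegative scalars $\|a_1\|,\dots,\|a_n\|$. Both displayed cases are the specializations $p=2$ (factor $n$) and $p=3$ (factor $n^2$) of a single statement, so I would state and prove that statement once for general integer (indeed real) $p\ge 1$ and then just substitute.

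For the left inequality, I would first apply the triangle inequality of the norm to obtain $\big\|\sum_{i=1}^n a_i\big\|\le \sum_{i=1}^n \|a_i\|$, valid for vectors or matrices and any submultiplicative/subadditive norm. Both sides are nonnegative, and $t\mapsto t^p$ is nondecreasing on $[0,\infty)$ for every $p\ge 1$; raising both sides to the power $p\in\{2,3\}$ therefore preserves the inequality and yields $\big\|\sum_i a_i\big\|^p\le\big(\sum_i\|a_i\|\big)^p$.

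For the right inequality, set $b_i:=\|a_i\|\ge 0$ and prove $\big(\sum_{i=1}^n b_i\big)^p\le n^{p-1}\sum_{i=1}^n b_i^p$. The cleanest argument is convexity of $\phi(t)=t^p$ on $[0,\infty)$: Jensen's inequality gives $\phi\!\big(\tfrac1n\sum_i b_i\big)\le\tfrac1n\sum_i\phi(b_i)$, i.e. $\big(\tfrac1n\sum_i b_i\big)^p\le\tfrac1n\sum_i b_i^p$, and multiplying by $n^p$ finishes it. For a self-contained substitute avoiding any citation I would note that the case $p=2$ is Cauchy–Schwarz applied to $(b_1,\dots,b_n)$ and $(1,\dots,1)$, and the case $p=3$ is H\"older with conjugate exponents $(3,\tfrac32)$ applied to the same two vectors. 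Concatenating the two stages gives \eqref{eq:l3} and \eqref{eq:l2}.

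There is no substantive obstacle here; the statement is elementary. The only point deserving a word of care is the monotone-power step in the left inequality: one must observe that the triangle inequality already produces an inequality between \emph{nonnegative} quantities before the power is taken, since $t\mapsto t^p$ is not monotone on all of $\mathbb{R}$. I would simply flag this and otherwise present the proof as the two-line argument it is.
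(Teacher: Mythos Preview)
Your proposal is correct. The paper states this as a ``Fact'' without proof, treating both chains as standard inequalities; your argument---triangle inequality followed by monotonicity of $t\mapsto t^p$ for the left halves, and Jensen/power-mean (equivalently Cauchy--Schwarz for $p=2$, H\"older for $p=3$) for the right halves---is exactly the standard justification one would supply, so there is nothing to compare against.
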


\begin{fact}
 For $a_1,\ldots,a_n>0 $ and $r<s$
    \begin{align}
        \left(\frac{1}{n}\sum_{i=1}^n a_i^r \right)^{1/r}\leq \left(\frac{1}{n}\sum_{i=1}^n a_i^s \right)^{1/s} \label{eq:pmean}
    \end{align}
\end{fact}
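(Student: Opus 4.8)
The plan is to reduce the power-mean inequality to a single application of Jensen's inequality (equivalently, to the convexity of a power function), or alternatively to one use of Hölder's inequality. Since this bound is only invoked in the paper for positive exponents (e.g.\ $r=2$, $s=3$), I would write out the representative case $0<r<s$ in full and then remark that the other sign configurations follow verbatim after tracking which exponentiations reverse the inequality.

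For $0<r<s$, first I would substitute $b_i \defn a_i^r>0$ for $1\le i\le n$, so that $a_i^s=b_i^{s/r}$, and put $p\defn s/r>1$. Raising the desired inequality to the power $s>0$ (order-preserving) turns it into
\begin{align*}
\left(\frac{1}{n}\sum_{i=1}^n b_i\right)^{p}\;\le\;\frac{1}{n}\sum_{i=1}^n b_i^{p},
\end{align*}
which is exactly Jensen's inequality for the convex map $\phi(x)=x^p$ on $(0,\infty)$ — convex because $\phi''(x)=p(p-1)x^{p-2}\ge 0$ when $p\ge 1$ — applied to the uniform average of $b_1,\dots,b_n$. Substituting back and raising to the power $1/r>0$ (again order-preserving) recovers the claim.

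Alternatively, and perhaps cleaner to record, I would invoke Hölder's inequality with conjugate exponents $\left(\tfrac{s}{r},\tfrac{s}{s-r}\right)$ (both exceed $1$ since $0<r<s$) applied to the vectors $(a_1^r,\dots,a_n^r)$ and $(1,\dots,1)$:
\begin{align*}
\sum_{i=1}^n a_i^r\;\le\;\left(\sum_{i=1}^n a_i^{s}\right)^{r/s}\left(\sum_{i=1}^n 1\right)^{(s-r)/s}\;=\;n^{\,1-r/s}\left(\sum_{i=1}^n a_i^{s}\right)^{r/s}.
\end{align*}
Dividing through by $n$ gives $\frac{1}{n}\sum_i a_i^r\le\left(\frac{1}{n}\sum_i a_i^s\right)^{r/s}$, and taking $1/r$-th powers finishes the argument.

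There is no serious obstacle; the only point needing care is sign bookkeeping in the fully general case $r<s$ with $r$ possibly nonpositive. Then one of the two exponentiation steps flips the inequality, but the required monotonicity is restored because the relevant power function $x\mapsto x^{s/r}$ (or $x\mapsto x^{r/s}$) is convex when its exponent lies outside $(0,1)$ and concave when it lies in $(0,1)$, so Jensen still delivers the correct direction; the limiting value $r\to 0$ reproduces the geometric mean by continuity. I would state this parenthetically and leave the explicit case split to the reader, since only positive exponents are used downstream in the paper.
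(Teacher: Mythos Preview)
Your proof is correct: both the Jensen route and the H\"older route are standard, valid derivations of the power-mean inequality, and your sign bookkeeping remark for general $r<s$ is accurate.

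However, there is nothing to compare against: the paper does not prove this statement at all. It is listed in the appendix under ``Some useful facts'' as a bare \texttt{Fact} environment with no accompanying proof; the authors simply invoke it as a known inequality (it is the classical generalized/power-mean inequality) at lines \eqref{eq:TG5}, \eqref{eq:TH3}, \eqref{eq:TBG6}, and \eqref{eq:TBH5}. So your write-up is strictly more than what the paper supplies. If anything, your observation that only the case $0<r<s$ is actually needed downstream (specifically $r\in\{1,2\}$, $s=3$) is a useful editorial note: the full sign-case analysis you sketch is unnecessary for the paper's purposes, and a one-line appeal to Jensen for $x\mapsto x^{s/r}$ with $s/r>1$ would already suffice.
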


\begin{lemma}[\cite{nest}]\label{lem:Nest}
Under  Assumption ~\ref{asm:lip}, i.e., the Hessian of the function is $L_2$-Lipschitz continuous, for any $\x,\y \in \mathbb{R}^d$, we have 
\begin{align}
& \| \nabla f(\x) -\nabla f(\y) - \nabla^2 f(\x)(\y-\x)\| \leq \frac{L_2}{2}\|\y-\x\|^2 \label{eq:Nest1} \\
& \left| f(\y)-f(\x) - \nabla f(\x)^T(\y-\x) -\frac{1}{2}(\y-\x)^T\nabla^2 f(\x)(\y-\x) \right| \leq \frac{L_2}{6}\|\y-\x\|^2 \label{eq:Nest2}
\end{align}
\end{lemma}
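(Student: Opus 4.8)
The plan is to prove both inequalities by writing the relevant remainders as integrals via the fundamental theorem of calculus and then bounding the integrands using the $L_2$-Lipschitz continuity of the Hessian from Assumption~\ref{asm:lip}. Throughout I abbreviate $\bd = \y - \x$, so that $t \mapsto \x + t\bd$ traces the segment from $\x$ to $\y$ as $t$ ranges over $[0,1]$; since $f$ is twice continuously differentiable (Assumption~\ref{asm:fun}), all the integral representations below are legitimate.

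For \eqref{eq:Nest1} I would start from $\nabla f(\y) - \nabla f(\x) = \int_0^1 \nabla^2 f(\x + t\bd)\,\bd\, dt$, obtained by differentiating $t \mapsto \nabla f(\x + t\bd)$. Subtracting $\nabla^2 f(\x)\bd = \int_0^1 \nabla^2 f(\x)\,\bd\, dt$ gives
\begin{align*}
\nabla f(\y) - \nabla f(\x) - \nabla^2 f(\x)\bd = \int_0^1 \left( \nabla^2 f(\x + t\bd) - \nabla^2 f(\x) \right)\bd\, dt .
\end{align*}
Taking norms, pulling the norm inside the integral, and applying the Hessian-Lipschitz bound $\|\nabla^2 f(\x + t\bd) - \nabla^2 f(\x)\| \le L_2 \|t\bd\| = L_2 t \|\bd\|$ yields the estimate $\int_0^1 L_2 t \|\bd\|^2\, dt = \tfrac{L_2}{2}\|\bd\|^2$. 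Since the left-hand side of \eqref{eq:Nest1} is this same quantity up to an overall sign, the bound follows.

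For \eqref{eq:Nest2} the plan is to reduce to the estimate just proved. Writing $f(\y) - f(\x) = \int_0^1 \nabla f(\x + t\bd)^T \bd\, dt$ and subtracting $\nabla f(\x)^T\bd = \int_0^1 \nabla f(\x)^T\bd\, dt$ together with $\tfrac12 \bd^T \nabla^2 f(\x)\bd = \int_0^1 t\, \bd^T \nabla^2 f(\x)\bd\, dt$, I obtain
\begin{align*}
f(\y) - f(\x) - \nabla f(\x)^T\bd - \tfrac12 \bd^T\nabla^2 f(\x)\bd = \int_0^1 \left( \nabla f(\x + t\bd) - \nabla f(\x) - t\,\nabla^2 f(\x)\bd \right)^T \bd\, dt .
\end{align*}
By Cauchy--Schwarz the integrand is at most $\|\nabla f(\x + t\bd) - \nabla f(\x) - \nabla^2 f(\x)(t\bd)\|\cdot\|\bd\|$, and \eqref{eq:Nest1} applied with $\y$ replaced by $\x + t\bd$ bounds the first factor by $\tfrac{L_2}{2}\|t\bd\|^2 = \tfrac{L_2}{2} t^2 \|\bd\|^2$. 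Integrating, the whole expression is bounded in absolute value by $\int_0^1 \tfrac{L_2}{2} t^2 \|\bd\|^3\, dt = \tfrac{L_2}{6}\|\bd\|^3$, giving \eqref{eq:Nest2}.

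There is no real obstacle here: this is the classical Nesterov--Polyak estimate, and the only care needed is the bookkeeping in parametrizing the segment and noting that the constant-in-$t$ terms integrate to the correct Taylor coefficients ($\int_0^1 1\, dt = 1$ for the gradient term and $\int_0^1 t\, dt = \tfrac12$ for the Hessian term). I would also remark that the natural outcome of this computation for \eqref{eq:Nest2} is the cubic power $\|\y - \x\|^3$ (matching the cubic regularizer in the auxiliary subproblem~\eqref{eq:sub}); accordingly the $\|\y-\x\|^2$ appearing in the displayed statement should read $\|\y-\x\|^3$.
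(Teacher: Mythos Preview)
Your argument is the standard Nesterov--Polyak derivation and is correct; the paper does not supply its own proof of this lemma but simply cites \cite{nest}, so there is no in-paper proof to compare against. You rightly flag that the right-hand side of \eqref{eq:Nest2} should be $\|\y-\x\|^3$ rather than $\|\y-\x\|^2$. One small correction to your write-up: the claim that the left-hand side of \eqref{eq:Nest1} coincides with your integral ``up to an overall sign'' is not quite accurate---negating $\nabla f(\y)-\nabla f(\x)-\nabla^2 f(\x)\bd$ yields $\nabla f(\x)-\nabla f(\y)+\nabla^2 f(\x)\bd$, not the displayed $\nabla f(\x)-\nabla f(\y)-\nabla^2 f(\x)\bd$. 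This is in fact a second typo in the stated lemma; the corrected (standard) form is exactly what is invoked at \eqref{eq:TG2} in the proof of Theorem~\ref{thm:oneround}, and that is precisely what your argument establishes.
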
 
Next, we establish the following Lemma that provides some nice properties of the cubic sub-problem.
\begin{lemma}\label{lem:cubic}
Let $M>0,\gamma>0,\g \in \mathbb{R}^d,\bH \in \mathbb{R}^{d \times d}$, and 
\begin{align}
\s = \arg\min_{\x} \g^T\x + \frac{\gamma}{2}\x^T\bH \x + \frac{M\gamma^2}{6}\|\x\|^3.
\label{eq:cub}
\end{align}
The following holds
\begin{align}
\g + \gamma \bH\s + \frac{M\gamma^2}{2}\|\s\|\s & =0, \label{eq:first}\\
 \bH + \frac{M\gamma}{2}\|\s\|\bI & \succeq \mathbf{0}, \label{eq:second} \\
  \g^T\s + \frac{\gamma}{2}\s^T\bH\s & \leq - \frac{M}{4}\gamma^2 \|\s\|^3.  \label{eq:third}
\end{align}
\end{lemma}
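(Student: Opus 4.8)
The three claims are precisely the first- and second-order optimality conditions of the unconstrained cubic sub-problem \eqref{eq:cub}, plus a consequence obtained by pairing them. The plan is to treat the objective
\[
\phi(\x) \defn \g^T\x + \frac{\gamma}{2}\x^T\bH\x + \frac{M\gamma^2}{6}\|\x\|^3
\]
as a $C^1$ function on $\mathbb{R}^d$ (it is $C^2$ away from the origin, and the map $\x\mapsto\|\x\|^3$ is differentiable everywhere with gradient $3\|\x\|\x$), so that a global minimizer $\s$ must be a critical point. Computing $\nabla\phi(\s)=0$ gives \eqref{eq:first} directly. For \eqref{eq:second} I would split into two cases. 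If $\s\neq 0$, then $\phi$ is twice differentiable at $\s$ and the second-order necessary condition $\nabla^2\phi(\s)\succeq\mathbf{0}$ must hold; since $\nabla^2\left(\frac{M\gamma^2}{6}\|\x\|^3\right)\big|_{\x=\s} = \frac{M\gamma}{2}\|\s\|\bI + \frac{M\gamma}{2}\frac{\s\s^T}{\|\s\|}\gamma$ — I will need to be a little careful with the exact Hessian of the cubic term — and dropping the rank-one PSD piece $\frac{M\gamma^2}{2}\frac{\s\s^T}{\|\s\|}\succeq\mathbf{0}$ only weakens the inequality, we get $\gamma\bH + \frac{M\gamma}{2}\|\s\|\bI\succeq\mathbf{0}$, i.e. (dividing by $\gamma>0$) exactly \eqref{eq:second}. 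If $\s=0$, then \eqref{eq:second} reads $\bH\succeq\mathbf{0}$; this follows because $0$ being a global minimizer of $\phi$ forces the quadratic-plus-cubic behavior near the origin to be nonnegative in every direction, and along any unit direction $\bu$ the function $t\mapsto \phi(t\bu) = t\,\g^T\bu + \frac{\gamma}{2}t^2\bu^T\bH\bu + \frac{M\gamma^2}{6}|t|^3$ has a minimum at $t=0$, whose second-order condition gives $\bu^T\bH\bu\ge 0$.

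For \eqref{eq:third}, the plan is to take the inner product of the stationarity equation \eqref{eq:first} with $\s$:
\[
\g^T\s + \gamma\,\s^T\bH\s + \frac{M\gamma^2}{2}\|\s\|^3 = 0,
\]
so that $\g^T\s + \gamma\s^T\bH\s = -\frac{M\gamma^2}{2}\|\s\|^3$. Separately, contracting the PSD relation \eqref{eq:second} with $\s$ gives $\s^T\bH\s \ge -\frac{M\gamma}{2}\|\s\|^3$, hence $\frac{\gamma}{2}\s^T\bH\s \ge -\frac{M\gamma^2}{4}\|\s\|^3$. Subtracting half of the first displayed identity's middle term appropriately: write $\g^T\s + \frac{\gamma}{2}\s^T\bH\s = \left(\g^T\s + \gamma\s^T\bH\s\right) - \frac{\gamma}{2}\s^T\bH\s = -\frac{M\gamma^2}{2}\|\s\|^3 - \frac{\gamma}{2}\s^T\bH\s \le -\frac{M\gamma^2}{2}\|\s\|^3 + \frac{M\gamma^2}{4}\|\s\|^3 = -\frac{M\gamma^2}{4}\|\s\|^3$, which is \eqref{eq:third}.

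I expect the main obstacle to be the non-smoothness of $\|\x\|^3$ at the origin and, correspondingly, the edge case $\s = 0$: I must argue optimality conditions carefully there rather than blindly differentiating, and I should double-check the exact form of the Hessian of the cubic term (the coefficient on the rank-one correction $\s\s^T/\|\s\|$) so that the clean statement $\bH + \frac{M\gamma}{2}\|\s\|\bI\succeq\mathbf{0}$ emerges after discarding only PSD pieces. Everything else is a routine chain-rule computation and one inner product; no external results beyond standard first- and second-order optimality conditions for unconstrained minimization are needed, though Assumption~\ref{asm:lip} (Lipschitz Hessian) is not even required here since this lemma is purely about the algebraic structure of the sub-problem.
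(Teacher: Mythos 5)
Your derivations of \eqref{eq:first} (stationarity of a $C^1$ function at a global minimizer) and of \eqref{eq:third} (dot \eqref{eq:first} with $\s$, then use \eqref{eq:second} contracted with $\s$) are fine, and the latter is exactly the paper's own argument. The genuine gap is in your proof of \eqref{eq:second} for $\s\neq 0$. The local second-order necessary condition gives $\nabla^2\phi(\s)=\gamma\bH+\frac{M\gamma^2}{2}\|\s\|\bI+\frac{M\gamma^2}{2}\frac{\s\s^T}{\|\s\|}\succeq\mathbf{0}$, and your step ``dropping the rank-one PSD piece only weakens the inequality'' runs in the wrong direction: from $A+B\succeq\mathbf{0}$ with $B\succeq\mathbf{0}$ you cannot conclude $A\succeq\mathbf{0}$. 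In fact no amount of care with the coefficient on $\s\s^T/\|\s\|$ can rescue this, because \eqref{eq:second} simply does not follow from first- plus local second-order conditions: it is a property of the \emph{global} minimizer. A one-dimensional counterexample to your route: take $h<0$, pick any $s$ with $|h|/M\le |s|<2|h|/M$, and set $g=-hs-\frac{M}{2}|s|s$; then $s$ is a critical point of $gx+\frac{h}{2}x^2+\frac{M}{6}|x|^3$ satisfying the local second-order condition $h+M|s|\ge 0$ (so it is a local minimizer), yet $h+\frac{M}{2}|s|<0$, i.e.\ \eqref{eq:second} fails there. So the set of points your argument certifies is strictly larger than the set where \eqref{eq:second} holds.

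The fix is to invoke global optimality, which is what the paper (following Nesterov--Polyak \cite{nest}) implicitly does when it cites the ``second order optimal condition'' of the cubic subproblem. Concretely: for any $\y$ with $\|\y\|=\|\s\|$ the cubic terms cancel in $\phi(\y)-\phi(\s)\ge 0$; eliminating $\g$ via \eqref{eq:first} and using $\|\s\|^2-\s^T\y=\tfrac12\|\y-\s\|^2$ on that sphere yields $(\y-\s)^T\bigl(\gamma\bH+\frac{M\gamma^2}{2}\|\s\|\bI\bigr)(\y-\s)\ge 0$, and since the directions $\y-\s$ (together with a sign flip and a continuity argument for directions orthogonal to $\s$) exhaust all of $\mathbb{R}^d$, \eqref{eq:second} follows after dividing by $\gamma>0$. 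Your treatment of the $\s=0$ case is fine ($\g=0$ from \eqref{eq:first}, then the quadratic term dominates along each ray), and your observation that Assumption~\ref{asm:lip} is not needed for this lemma is correct.
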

\begin{proof}
The equations ~\eqref{eq:first} and ~\eqref{eq:second} are from the first and second order optimal condition. We proof ~\eqref{eq:third}, by using the conditions of ~\eqref{eq:first} and ~\eqref{eq:second}.
\begin{align}
    \g^T\s + \frac{\gamma}{2}\gamma\s^T\bH\s & = -\left( \gamma\bH\s + \frac{M}{2}\gamma^2\|\s\|\s\right)^T\s + \frac{\gamma}{2}\gamma\s^T\bH\s \label{eq:L1} \\
    & = - \gamma\s^T\bH\s - \frac{M}{2}\gamma^2\|\s\|^3 +\frac{\gamma}{2}\gamma\s^T\bH\s \nonumber \\
    & \leq   \frac{M}{4}\gamma^2\|\s\|^3- \frac{M}{2}\gamma^2\|\s\|^3\label{eq:L3} \\
    & =- \frac{M}{4}\gamma^2\|\s\|^3 \nonumber.
\end{align}
In ~\eqref{eq:L1}, we substitute the expression $\g$ from the equation~\eqref{eq:first}. In ~\eqref{eq:L3}, we use the fact that $\s^T\bH\s + \frac{M\gamma}{2}\|\s\|^3 >0$ from the equation ~\eqref{eq:second}.
\end{proof}

\subsection{ Proof of Theorem ~\ref{thm:oneround}}

First we state the results of Lemma~\ref{lem:cubic} for each worker machine in iteration $k$,
\begin{align}
\g_{i,k} + \gamma\bH_{i,k}\s_{i,k+1} + \frac{M}{2}\gamma^2\|\s_{i,k+1}\|\s_{i,k+1} & =0 \label{eq:T_first}\\
\gamma\bH_{i,k} + \frac{M}{2}\gamma^2\|\s_{i,k+1}\|\mathbf{I} & \succeq \mathbf{0} \label{eq:T_second} \\
\g_{i,k}^T\s_{i,k+1} + \frac{\gamma}{2}\s_{i,k+1}^T\bH_{i,k}\s_{i,k+1} & \leq - \frac{M}{4}\gamma^2 \|\s_{i,k+1}\|^3 \label{eq:T_third}
\end{align}
We also use the following fact form the setup and trimming set
\begin{align}
 | \cU | & = |\cU \cap \cM| + |\cU \cap \cB| \label{eq: B1}\\
 | \cM | & = |\cU \cap \cM| + |\cT \cap \cM| \label{eq:B2}   
\end{align} 
Combining both the equations ~\eqref{eq: B1} and ~\eqref{eq:B2}, we have
\begin{align}
| \cU | & = |\cM| -|\cT \cap \cM| + |\cU \cap \cB| \label{eq:B3}
\end{align}

Now we state the following fact from the trimming set 
\begin{align}
    \sum_{i \in \mathcal{U}\cap\mathcal{B}} \| .\| & \leq \alpha m \max_{i \in \mathcal{M}} \| .\|    \nonumber \\
    \sum_{i \in \mathcal{M}\cap\mathcal{T}}\|.\| &\leq \sum_{i \in \mathcal{M}} \|.\| \nonumber \\
    \intertext{ Combining both we get}
    \sum_{i \in \mathcal{U}\cap\mathcal{B}} \| .\|+ \sum_{i \in \mathcal{M}\cap\mathcal{T}}\|.\| & \leq \sum_{i \in \mathcal{M}} \|.\| + \alpha m \max_{i \in \mathcal{M}}\|.\| \label{eq:upbound}
\end{align}
For the rest of the calculation, we use the following notation 
\begin{align*}
    \Gamma = \max_{i \in \mathcal{M},k} \|\s_{i,k}\|.
\end{align*}
If the optimization sub-problem domain is bounded, $\Gamma$ can be upper-bounded by the diameter of the parameter space. Note that in the definition of $\Gamma$, the maximum is taken over good machines only.

From the definition of the $\delta$-approximate compressor in Definition~\ref{def:compress}, we use the following simple fact
\begin{align}
    \|Q(\x)\|\leq (1 + \sqrt{1-\delta})\|\x\| \label{eq:comp}
\end{align}

At any iteration $k$, we have 
\begin{align}
   & f(\x_{k+1})-f(\x_k) \nonumber \\
\leq & \nabla f(\x_k)^T(\x_{k+1}-\x_k) + \frac{1}{2}(\x_{k+1}-\x_k)^T \nabla^2 f(\x_k) (\x_{k+1}-\x_k) + \frac{L_2}{6}\left\|\x_{k+1}-\x_k \right\|^3  \nonumber  \\ 
= &\underbrace{ \frac{\eta_k}{|\mathcal{U}|}\nabla f(\x_k)^T\sum_{i \in \mathcal{U}}Q(\s_{i,k+1}) }_{Term 1}+ \underbrace{ \frac{\eta_k^2}{2|\mathcal{U}|^2}\left(\sum_{i \in \mathcal{U}} Q(\s_{i,k+1}) \right)^T \nabla^2 f(\x_k)\left(\sum_{i \in \mathcal{U}} Q(\s_{i,k+1}) \right) }_{Term 2} \nonumber  \\
& + \underbrace{ \frac{L_2}{6} \left\|\frac{\eta_k}{|\mathcal{U}|}\sum_{i \in \mathcal{U}}Q(\s_{i,k+1}) \right\|^3 }_{Term 3} \label{eq:T31}
\end{align}

First we choose the Term 1 in \eqref{eq:T31} and expand it using ~\eqref{eq:B3} 
\begin{align}
    & \frac{\eta_k}{|\mathcal{U}|}\nabla f(\x_k)^T\sum_{i \in \mathcal{U}} Q(\s_{i,k+1}) \nonumber  \\
= & \frac{\eta_k}{(1-\beta)m}\nabla f(\x_k)^T\left[ \sum_{i \in \mathcal{M}}Q(\s_{i,k+1})- \sum_{i \in \mathcal{M}\cap\mathcal{T}}Q(\s_{i,k+1})  +\sum_{i \in \mathcal{U}\cap\mathcal{B}}Q(\s_{i,k+1}) \right] \nonumber \\ 
 = & \frac{\eta_k}{(1-\beta)m}\sum_{i \in \mathcal{M}}\g_{i,k}^T\s_{i,k+1}+ \frac{\eta_k}{(1-\beta)m} \sum_{i \in \mathcal{M}}\left[\nabla f(\x_k)^T Q(\s_{i,k+1})- \g_{i,k}^T\s_{i,k+1} \right] \nonumber \\
& + \frac{\eta_k}{(1-\beta)m} \nabla f(\x_k)^T\left[- \sum_{i \in \mathcal{M}\cap\mathcal{T}}Q(\s_{i,k+1})  +\sum_{i \in \mathcal{U}\cap\mathcal{B}}Q(\s_{i,k+1}) \right] \nonumber  \\
 \leq & \frac{\eta_k}{(1-\beta)m}\sum_{i \in \mathcal{M}}\g_{i,k}^T\s_{i,k+1}+ \frac{\eta_k}{(1-\beta)m} \sum_{i \in \mathcal{M}}\left[\nabla f(\x_k)^T Q(\s_{i,k+1}) -\nabla f(\x_k)^T \s_{i,k+1}+\nabla f(\x_k)^T \s_{i,k+1} - \g_{i,k}^T\s_{i,k+1} \right] \nonumber \\
& + \frac{\eta_k}{(1-\beta)m} \left[ \sum_{i \in \mathcal{M}\cap\mathcal{T}}
\|\nabla f(\x_k)\| \|Q(\s_{i,k+1})\|  +\sum_{i \in \mathcal{U}\cap\mathcal{B}} \|\nabla f(\x_k)\|\|Q(\s_{i,k+1})\| \right] \nonumber \\
\leq & \frac{\eta_k}{(1-\beta)m}\sum_{i \in \mathcal{M}}\g_{i,k}^T\s_{i,k+1}+ \frac{\eta_k}{(1-\beta)m} \sum_{i \in \mathcal{M}}\left[\| \nabla f(\x_k)\|\| Q(\s_{i,k+1}) -\s_{i,k+1}\| + \|\nabla f(\x_k) - \g_{i,k}\| \|\s_{i,k+1}\| \right] \nonumber \\
& + \frac{\eta_k L}{(1-\beta)m} \left[ \sum_{i \in \mathcal{M}\cap\mathcal{T}}
 \|Q(\s_{i,k+1})\|  +\sum_{i \in \mathcal{U}\cap\mathcal{B}} \|Q(\s_{i,k+1})\| \right] \nonumber \\
 \leq & \frac{\eta_k}{(1-\beta)m}\sum_{i \in \mathcal{M}}\g_{i,k}^T\s_{i,k+1}+ \frac{\eta_k}{(1-\beta)m} \sum_{i \in \mathcal{M}}\left[L\sqrt{1-\delta} \| \s_{i,k+1}\| + \epsilon_g \|\s_{i,k+1}\| \right] \nonumber \\
& + \frac{\eta_k L}{(1-\beta)m} \left[ \sum_{i \in \mathcal{M}\cap\mathcal{T}}
 \|Q(\s_{i,k+1})\|  +\sum_{i \in \mathcal{U}\cap\mathcal{B}} \|Q(\s_{i,k+1})\| \right] \nonumber
\end{align}

In the above calculation we use the fact gradients are bounded $\| \nabla f(\x_k)\| \leq L $ and the bound of gradient dissimilarity. 

We use the result ~\eqref{eq:upbound}, we get
\begin{align}
   & Term 1 \nonumber \\
   \leq & \frac{\eta_k}{(1-\beta)m}\sum_{i \in \mathcal{M}}\g_{i,k}^T\s_{i,k+1}+ \frac{\eta_k}{(1-\beta)m} \sum_{i \in \mathcal{M}}\left[L\sqrt{1-\delta} \| \s_{i,k+1}\| + \epsilon_g \|\s_{i,k+1}\| \right] \nonumber \\
& + \frac{\eta_k L (1 +\sqrt{1-\delta})}{(1-\beta)m} \left[ \sum_{i \in \mathcal{M}} \|Q(\s_{i,k+1})\| + \alpha m \Gamma\right] \nonumber\\
 \leq & \frac{\eta_k}{(1-\beta)m}\sum_{i \in \mathcal{M}}\g_{i,k}^T\s_{i,k+1}+ \frac{\eta_k(1-\alpha)}{(1-\beta)} (L\sqrt{1-\delta} + \epsilon_g) \Gamma   + \frac{\eta_k L}{(1-\beta)}  (1+ \sqrt{1-\delta})\Gamma \nonumber \\
 = & \frac{\eta_k}{(1-\beta)m}\sum_{i \in \mathcal{M}}\left[ \g_{i,k}^T\s_{i,k+1}+ \frac{\gamma}{2}\s_{i,k+1}^T\bH_{i,k}\s_{i,k+1} \right]- \frac{\eta_k}{(1-\beta)m}\sum_{i \in \mathcal{M}}\frac{\gamma}{2}\s_{i,k+1}^T\bH_{i,k}\s_{i,k+1} \nonumber \\
 &+\frac{\eta_k(1-\alpha)}{(1-\beta)} (L\sqrt{1-\delta} + \epsilon_g) \Gamma   + \frac{\eta_k L}{(1-\beta)}  (1+ \sqrt{1-\delta})\Gamma \nonumber \\
\leq  & -\frac{\gamma^2M\eta_k}{4(1-\beta)m}\sum_{i \in \mathcal{M}}\|\s_{i,k+1}\|^3- \frac{\eta_k}{(1-\beta)m}\sum_{i \in \mathcal{M}}\frac{\gamma}{2}\s_{i,k+1}^T\bH_{i,k}\s_{i,k+1} \nonumber \\
 & +\frac{\eta_k(1-\alpha)}{(1-\beta)} (L\sqrt{1-\delta} + \epsilon_g) \Gamma   + \frac{\eta_k L}{(1-\beta)}  (1+ \sqrt{1-\delta})\Gamma \label{eq:T311}
\end{align}

In line ~\eqref{eq:T311}, we use the bound stated in ~\eqref{eq:T_third}.
Now we consider the Term 3 in equation ~\eqref{eq:T31},

\begin{align}
    &\frac{L_2}{6} \left\|\frac{\eta_k}{|\mathcal{U}|}\sum_{i \in \mathcal{U}}Q(\s_{i,k+1}) \right\|^3 \nonumber \\
    \leq &  \frac{L_2\eta_k^3}{6|\mathcal{U}|}\sum_{i \in \mathcal{U}} \left\|Q(\s_{i,k+1}) \right\|^3 \nonumber \\
    \leq &   \frac{L_2\eta_k^3}{6|\mathcal{U}|} \left[\sum_{i \in \mathcal{M}}\|Q(\s_{i,k+1})\|^3- \sum_{i \in \mathcal{M}\cap\mathcal{T}}\|Q(\s_{i,k+1}) \|^3+\sum_{i \in \mathcal{U}\cap\mathcal{B}}\|Q(\s_{i,k+1})\|^3 \right] \nonumber\\
    \leq &   \frac{L_2\eta_k^3}{6|\mathcal{U}|} \left[\sum_{i \in \mathcal{M}}\|Q(\s_{i,k+1})\|^3+\sum_{i \in \mathcal{U}\cap\mathcal{B}}\|Q(\s_{i,k+1})\|^3 \right] \nonumber
\end{align}
In the first inequality we use the fact stated in  ~\eqref{eq:l3}. Next we expand the trimmed set $\mathcal{U}$ using \eqref{eq:B3}. 

Now we use the fact in equation ~\eqref{eq:upbound} and the definition of the $\delta$-approximate compressor, we get
\begin{align}
    Term 3 \leq & \frac{L_2\eta_k^3}{6(1-\beta)m} \left[\sum_{i \in \mathcal{M}} (1+\sqrt{1-\delta})^3\|\s_{i,k+1}\|^3 + (1+\sqrt{1-\delta})^3(\alpha m)\Gamma^3  \right]  
\end{align}

Now we consider the Term 2 in ~\eqref{eq:T31} 
\begin{align}
   & \frac{\eta_k^2}{2|\mathcal{U}|^2}\left(\sum_{i \in \mathcal{U}} Q(\s_{i,k+1}) \right)^T \nabla^2 f(\x_k)\left(\sum_{i \in \mathcal{U}} Q(\s_{i,k+1}) \right) \nonumber \\
   & = \frac{\eta_k^2}{2(1-\beta)^2m^2}\left( \underbrace{\sum_{i \in \mathcal{U}}Q(\s_{i,k+1})^T \nabla^2 f(\x_k)Q(\s_{i,k+1})}_{Term 4}+ \underbrace{\sum_{i\neq j \in \mathcal{U}}Q(\s_{i,k+1})^T \nabla^2 f(\x_k)Q(\s_{j,k+1}) }_{Term 5}\right)\label{eq:Te2}
\end{align}

Now we consider Term 4 in \eqref{eq:Te2} and expand it using ~\eqref{eq:B3}
\begin{align}
& \sum_{i \in \mathcal{U}}Q(\s_{i,k+1})^T \nabla^2 f(\x_k)Q(\s_{i,k+1}) \nonumber \\
=&  \sum_{i \in \mathcal{M}}Q(\s_{i,k+1})^T \nabla^2 f(\x_k)Q(\s_{i,k+1}) -\sum_{i \in \mathcal{M}\cap \mathcal{T} }Q(\s_{i,k+1})^T \nabla^2 f(\x_k)Q(\s_{i,k+1}) \nonumber \\& + \sum_{i \in \mathcal{B}\cap \mathcal{U} }Q(\s_{i,k+1})^T \nabla^2 f(\x_k)Q(\s_{i,k+1}) \nonumber  \\
\leq  & \sum_{i \in \mathcal{M}}\s_{i,k+1}^T \bH_{i,k}\s_{i,k+1} - \sum_{i \in \mathcal{M}}\s_{i,k+1}^T \bH_{i,k}\s_{i,k+1} +\sum_{i \in \mathcal{M}}Q(\s_{i,k+1})^T \nabla^2 f(\x_k)Q(\s_{i,k+1})  \nonumber \\
& \qquad  + \sum_{i \in \mathcal{M}\cap \mathcal{T} }L_1 \|Q(\s_{i,k+1})\|^2 + \sum_{i \in \mathcal{B}\cap \mathcal{U} } L_1 \|Q(\s_{i,k+1}) \|^2 \nonumber \\ 
\leq  & \sum_{i \in \mathcal{M}}\s_{i,k+1}^T \bH_{i,k}\s_{i,k+1} - \sum_{i \in \mathcal{M}}\s_{i,k+1}^T (\nabla f(\x_k)-\bH_{i,k})\s_{i,k+1} + \sum_{i \in \mathcal{M}}\s_{i,k+1}^T \nabla f(\x_k)\s_{i,k+1}\nonumber \\
& \qquad +  \sum_{i \in \mathcal{M}}Q(\s_{i,k+1})^T \nabla^2 f(\x_k)Q(\s_{i,k+1})   + \sum_{i \in \mathcal{M}\cap \mathcal{T} }L_1 \|Q(\s_{i,k+1})\|^2 + \sum_{i \in \mathcal{B}\cap \mathcal{U} } L_1 \|Q(\s_{i,k+1}) \|^2 \nonumber \\ 
\leq &  \sum_{i \in \mathcal{M}}\s_{i,k+1}^T \bH_{i,k}\s_{i,k+1} + \sum_{i \in \mathcal{M}}(\epsilon_H+L_1) \|\s_{i,k+1}\|^2 \nonumber \\
\qquad &+   \sum_{i \in \mathcal{M}}L_1 \| Q(\s_{i,k+1})\|^2   + \sum_{i \in \mathcal{M} } L_1  \|Q(\s_{i,k+1}) \|^2 + L_1\alpha m (1+\sqrt{1-\delta})^2\Gamma^2 \nonumber \\
\leq &  \sum_{i \in \mathcal{M}}\s_{i,k+1}^T \bH_{i,k}\s_{i,k+1} + \sum_{i \in \mathcal{M}}(\epsilon_H+L_1) \|\s_{i,k+1}\|^2 + \sum_{i \in \mathcal{M}}L_1 2(1+\sqrt{1-\delta})^2 \|\s_{i,k+1}\|^2 + L_1\alpha m (1+\sqrt{1-\delta})^2\Gamma^2 \nonumber  \\
& = \sum_{i \in \mathcal{M}}\s_{i,k+1})^T \bH_{i,k}\s_{i,k+1} + \sum_{i \in \mathcal{M}}[\epsilon_H+L_1(1+ 2(1+ \sqrt{1-\delta})^2)  ]\|\s_{i,k+1}\|^2 + L_1\alpha m(1+\sqrt{1-\delta})^2\Gamma^2 \label{eq:Te3}
\end{align}

Now we consider the  Term 5 in equation ~\eqref{eq:Te3}
\begin{align}
   & \sum_{i\neq j \in \mathcal{U}}Q(\s_{i,k+1})^T \nabla^2 f(\x_k)Q(\s_{j,k+1})  \nonumber \\
  \leq  & \sum_{i\neq j \in \mathcal{U}}L_1\|Q(\s_{i,k+1})\|\|Q(\s_{j,k+1})\|  \nonumber \\ 
  \leq &  \sum_{i\neq j \in \mathcal{U}}L_1(1+ \sqrt{1-\delta})^2\|\s_{i,k+1}\|\|\s_{j,k+1}\|  \nonumber \\
  = & L_1(1+ \sqrt{1-\delta})^2\left[\|\sum_{i \in \mathcal{U}}\s_{i,k+1}\|^2- \sum_{i \in \mathcal{U}}\|\s_{i,k+1}\|^2\right] \nonumber \\
  \leq  & L_1(1+ \sqrt{1-\delta})^2\left[|\mathcal{U}|\sum_{i \in \mathcal{U}}\|\s_{i,k+1}\|^2- \sum_{i \in \mathcal{U}}\|\s_{i,k+1}\|^2\right] \nonumber \\
  = & L_1(1+ \sqrt{1-\delta})^2( (1-\beta)m-1)\left[ \sum_{i \in \mathcal{M}}\|\s_{i,k+1}\|^2 - \sum_{i \in \mathcal{M}\cap \mathcal{T} }\|\s_{i,k+1}\|^2 + \sum_{i \in \mathcal{B}\cap \mathcal{U} }\|\s_{i,k+1}\|^2\right] \nonumber \\
  \leq & L_1(1+ \sqrt{1-\delta})^2( (1-\beta)m-1)\left[ \sum_{i \in \mathcal{M}}\|\s_{i,k+1}\|^2 + \alpha m\Gamma^2\right]  \label{eq:Te4}
\end{align}
Now combining the results in \eqref{eq:Te3} and \eqref{eq:Te2} we get,

\begin{align}
   & Term 2 \nonumber \\ \leq &  \frac{\eta_k^2}{2(1-\beta)^2m^2}\sum_{i \in \mathcal{M}}\s_{i,k+1})^T \bH_{i,k}\s_{i,k+1}+ \frac{\eta_k^2(1-\alpha)}{2(1-\beta)^2m}\left[\epsilon_H+L_1(1+(1-\beta)m+ (1+ \sqrt{1-\delta})^2)   \right]\Gamma^2 \nonumber \\
   & +\frac{\eta_k^2}{2(1-\beta)^2m^2}L_1(1+ \sqrt{1-\delta})^2 \alpha(1-\beta)m^2\Gamma^2
\end{align}

Now we combine all the upper bound of the Term 1, Term 2 and Term 3 

\begin{align*}
     & f(\x_{k+1})-f(\x_k) \\
    \leq   & [-\frac{\gamma^2M\eta_k}{4(1-\beta)m} +\frac{L_2\eta_k^3}{6(1-\beta)m} (1+\sqrt{1-\delta})^3  ]\sum_{i \in \mathcal{M}}\|\s_{i,k+1}\|^3  - \frac{\eta_k}{2(1-\beta)m} [ \gamma-  \frac{\eta_k}{(1-\beta)m} ] \sum_{i \in \mathcal{M}}\s_{i,k+1}^T\bH_{i,k}\s_{i,k+1} \\ &+\frac{\eta_k(1-\alpha)}{(1-\beta)} (L\sqrt{1-\delta} + \epsilon_g) \Gamma   + \frac{\eta_k L}{(1-\beta)}  (1+ \sqrt{1-\delta})\Gamma\\ & + \frac{\eta_k^2(1-\alpha)}{2(1-\beta)^2m}\left[\epsilon_H+L_1(1+(1-\beta)m+ (1+ \sqrt{1-\delta})^2)   \right]\Gamma^2 \nonumber \\
   & +\frac{\eta_k^2}{2(1-\beta)^2}L_1(1+ \sqrt{1-\delta})^2 \alpha(1-\beta)\Gamma^2  + \frac{L_2\eta_k^3}{6(1-\beta)} (1+\sqrt{1-\delta})^3\alpha \Gamma^3  
\end{align*}
 Also we assume that $\gamma \geq \frac{\eta_k}{(1-\beta)m}$ and use the fact  $-\s_{i,k+1}\bH_{i,k}\s_{i,k+1}\leq \frac{M\gamma}{2}\|\s_{i,k+1}\|^3 $.

Now we have, 
\begin{align*}
    & f(\x_{k+1})-f(\x_k) \\
\leq   & [-\frac{\gamma^2M}{4(1-\beta)\eta_k^2m} +\frac{L_2}{6(1-\beta)m} (1+\sqrt{1-\delta})^3  ]\sum_{i \in \mathcal{M}}\|\eta_k\s_{i,k+1}\|^3  - \frac{\eta_k}{2(1-\beta)m} [ \gamma-  \frac{\eta_k}{(1-\beta)m} ] \sum_{i \in \mathcal{M}}\frac{M\gamma}{2}\|\s_{i,k+1}\|^3 \\ 
&+\frac{\eta_k(1-\alpha)}{(1-\beta)} (L\sqrt{1-\delta} + \epsilon_g) \Gamma   + \frac{\eta_k L}{(1-\beta)}  (1+ \sqrt{1-\delta})\Gamma \\ & + \frac{\eta_k^2(1-\alpha)}{2(1-\beta)^2m}\left[\epsilon_H+L_1(1+(1-\beta)m+ (1+ \sqrt{1-\delta})^2)   \right]\Gamma^2 \nonumber \\
   & +\frac{\eta_k^2}{2(1-\beta)^2}L_1(1+ \sqrt{1-\delta})^2 \alpha(1-\beta)\Gamma^2  + \frac{L_2\eta_k^3}{6(1-\beta)} (1+\sqrt{1-\delta})^3\alpha \Gamma^3 \\
 \leq & [-\frac{\gamma M}{4(1-\beta)\eta_k m^2} +\frac{L_2}{6(1-\beta)m} (1+\sqrt{1-\delta})^3  ]\sum_{i \in \mathcal{M}}\|\eta_k\s_{i,k+1}\|^3  \\
 &+\frac{\eta_k(1-\alpha)}{(1-\beta)} (L\sqrt{1-\delta} + \epsilon_g) \Gamma   + \frac{\eta_k L}{(1-\beta)}  (1+ \sqrt{1-\delta})\Gamma \\ &+ \frac{\eta_k^2(1-\alpha)}{2(1-\beta)^2m}\left[\epsilon_H+L_1(1+(1-\beta)m+ (1+ \sqrt{1-\delta})^2)   \right]\Gamma^2 \nonumber \\
   & +\frac{\eta_k^2}{2(1-\beta)^2}L_1(1+ \sqrt{1-\delta})^2 \alpha(1-\beta)\Gamma^2  + \frac{L_2\eta_k^3}{6(1-\beta)} (1+\sqrt{1-\delta})^3\alpha \Gamma^3 \\
  = & -\lambda_{comp}\frac{1}{(1-\alpha)m} \sum_{i \in \mathcal{M}}  \|\eta_k\s_{i,k+1}\|^3 + \lambda_{\Gamma}
\end{align*}

where 
\begin{align*}
    \lambda_{comp} =[\frac{\gamma M}{4(1-\beta)\eta_k m^2} -\frac{L_2}{6(1-\beta)m} (1+\sqrt{1-\delta})^3  ](1-\alpha)m
\end{align*}

and
\begin{align*}
    \lambda_{\Gamma}& = \frac{\eta_k(1-\alpha)}{(1-\beta)} \left[(1-\alpha)(L\sqrt{1-\delta} + \epsilon_g) +   L  (1+ \sqrt{1-\delta})\right]\Gamma  + + \frac{L_2\eta_k^3}{6(1-\beta)} (1+\sqrt{1-\delta})^3\alpha \Gamma^3 \\
    & + \frac{\eta_k^2(1-\alpha)}{2(1-\beta)^2m}\left[\epsilon_H+L_1(1+(1-\beta)m+ (1+ \sqrt{1-\delta})^2)   \right]\Gamma^2 
    +\frac{\eta_k^2}{2(1-\beta)^2}L_1(1+ \sqrt{1-\delta})^2 \alpha(1-\beta)\Gamma^2  
\end{align*}

To ensure $\lambda_{comp}>0$, we need
\begin{align}
    M > \frac{4\eta_k^2m}{\gamma^2}\frac{L_2}{6} (1+\sqrt{1-\delta})^3 \label{eq:Mvalue}
\end{align}

Now for the choice of $\eta_k =\frac{c}{T}$ and $\gamma = \frac{c_1}{T}$ for some constant $c,c_1 >0$. We have $M = \mathcal{O}(m)$. Thus we have 
$\lambda_{comp}= \mathcal{O}(1) $ and $\lambda_{\Gamma}= \mathcal{O}(\frac1T) + \text{minor terms}$.

Now we have 
\begin{align*}
\frac{1}{(1-\alpha)m}\sum_{i \in \mathcal{M}}\|\eta_k\s_{i,k+1}\|^3 \leq \frac{1}{\lambda_{comp}}\left[f(\x_k) -f(\x_{k+1})   + \lambda_{\Gamma}\right]
\end{align*}

Now we consider the step $k_0$, where  $k_0= \arg\min_{0\leq k \leq T-1}\|\x_{k+1}-\x_{k}\|=\arg\min_{0\leq k \leq T-1}\|\eta_kQ(\s_{k+1})\| $. 
\begin{align*}
  \min_{0\leq k \leq T}\|\x_{k+1}-\x_{k}\|^3 &=  \min_{0\leq k \leq T}\|\eta_k\ Q(s_{k+1})\|^3 \\
  & \leq \frac{1}{(1-\beta)m}\sum_{i\in \mathcal{U}}\|\eta_{k_0}Q(\s_{i,k_0+1})\|^3 \\
  & \leq \frac{(1+\sqrt{1-\delta})^3}{(1-\beta)m}\sum_{i\in \mathcal{U}}\|\eta_{k_0}\s_{i,k_0+1}\|^3 \\
  & = \frac{(1+\sqrt{1-\delta})^3}{(1-\beta)m} \left[ \sum_{i\in \mathcal{M}}\|\eta_{k_0}\s_{i,k_0+1}\|^3 - \sum_{i\in \mathcal{M}\cap \mathcal{T}}\|\eta_{k_0}\s_{i,k_0+1}\|^3 +\sum_{i\in \mathcal{U}\cap \mathcal{B}}\|\eta_{k_0}\s_{i,k_0+1}\|^3\right] \\
  & \leq \frac{(1+\sqrt{1-\delta})^3}{(1-\beta)m} \left[ \sum_{i\in \mathcal{M}}\|\eta_{k_0}\s_{i,k_0+1}\|^3 + \alpha m \eta^3_{k_0}\Gamma^3\right] \\
   & \leq \frac{1}{T}\sum_{k=0}^{T-1}(1+\sqrt{1-\delta})^3\frac{(1-\alpha)}{(1-\beta)}\left[\frac{1}{(1-\alpha)m} \sum_{i\in \mathcal{M}}\|\eta_{k}\s_{i,k+1}\|^3 + \frac{\alpha}{1-\alpha}\eta^3_{k_0}\Gamma^3 \right] \\
   & \leq \frac{1}{T}\sum_{k=0}^{T-1}(1+\sqrt{1-\delta})^3\frac{(1-\alpha)}{(1-\beta)}\left[ \frac{f(\x_{k}) -f(\x_{k+1})}{\lambda_{comp}}+  \frac{\lambda_{\Gamma}}{\lambda_{comp}} + \frac{\alpha}{1-\alpha}\frac{\eta^3_{k_0}\Gamma^3}{\lambda_{comp}} \right]\\
   & \leq \frac{1}{T}(1+\sqrt{1-\delta})^3\frac{(1-\alpha)}{(1-\beta)}\left[ \frac{f(\x_{0}) -f^*}{\lambda_{comp}}+ \sum_{k=0}^{T-1}\frac{\lambda_{\Gamma}}{\lambda_{comp}} + \sum_{k=0}^{T-1}\frac{\alpha}{1-\alpha}\frac{\eta^3_{k_0}\Gamma^3}{\lambda_{comp}} \right]
\end{align*}

With the choice of  $\eta_k,\gamma$ we have the terms 
$\sum_{k=0}^{T-1}\frac{\lambda_{\Gamma}}{\lambda_{comp}}$  and $\sum_{k=0}^{T-1}\frac{\alpha}{1-\alpha}\frac{\eta^3_{k_0}\Gamma^3}{\lambda_{comp}}$ are upper bounded by constant.

We have 
\begin{align*}
    \frac{1}{(1-\beta)m} \left[ \sum_{i\in \mathcal{M}}\|\eta_{k_0}\s_{i,k_0+1}\|^3 + \alpha m \eta^3_{k_0}\Gamma^3\right] & \leq \frac{1}{T}\frac{(1-\alpha)}{(1-\beta)}\left[ \frac{f(\x_{0}) -f^*}{\lambda_{comp}}+ \sum_{k=0}^{T-1}\frac{\lambda_{\Gamma}}{\lambda_{comp}} + \sum_{k=0}^{T-1}\frac{\alpha}{1-\alpha}\frac{\eta^3_{k_0}\Gamma^3}{\lambda_{comp}} \right] \\
    \Rightarrow \frac{1}{(1-\alpha)m}\left[ \sum_{i\in \mathcal{M}}\|\eta_{k_0}\s_{i,k_0+1}\|^3 + \alpha m \eta^3_{k_0}\Gamma^3\right] & \leq \frac{1}{T}\left[ \frac{f(\x_{0}) -f^*}{\lambda_{comp}}+ \sum_{k=0}^{T-1}\frac{\lambda_{\Gamma}}{\lambda_{comp}} + \sum_{k=0}^{T-1}\frac{\alpha}{1-\alpha}\frac{\eta^3_{k_0}\Gamma^3}{\lambda_{comp}} \right]\\
    \Rightarrow \frac{1}{(1-\alpha)m} \sum_{i\in \mathcal{M}}\|\eta_{k_0}\s_{i,k_0+1}\|^3  & \leq \frac{1}{T}\left[ \frac{f(\x_{0}) -f^*}{\lambda_{comp}}+ \sum_{k=0}^{T-1}\frac{\lambda_{\Gamma}}{\lambda_{comp}}  \right] =\frac{\psi_{comp}}{T}
\end{align*}
where $\psi_{comp}=\left[ \frac{f(\x_{0}) -f^*}{\lambda_{comp}}+ \sum_{k=0}^{T-1}\frac{\lambda_{\Gamma}}{\lambda_{comp}}  \right] $

And
\begin{align*}
  \|\x_{k_0+1}-\x_{k_0}\| \leq \frac{\psi_{comp,1}}{T^{1/3}}  
\end{align*}
where 
$\psi_{comp,1}= (1+\sqrt{1-\delta})\frac{(1-\alpha)^{1/3}}{(1-\beta)^{1/3}}\left[ \frac{f(\x_{0}) -f^*}{\lambda_{comp}}+ \sum_{k=0}^{T-1}\frac{\lambda_{\Gamma}}{\lambda_{comp}} + \sum_{k=0}^{T-1}\frac{\alpha}{1-\alpha}\frac{\eta^3_{k_0}\Gamma^3}{\lambda_{comp}} \right]^{1/3}$

So, we have both the term $\psi_{comp},\psi_{comp,1}$ are of the order $\mathcal{O}(1)$.

The gradient condition is 

\begin{align}
  &\left\| \nabla f(\x_{k+1}) \right\| \nonumber \\  & =  \left\|\nabla f(\x_{k+1})- \frac{1}{|\mathcal{M}|}\sum_{i \in \mathcal{M}} \g_{i,k}-\frac{1}{|\mathcal{M}|}\sum_{i \in \mathcal{M}} \gamma\bH_{i,k+1}\s_{i,k+1}- \frac{1}{|\mathcal{M}|}\sum_{i \in \mathcal{M}}\frac{M\gamma^2}{2}\|\s_{i,k+1}\|\s_{i,k+1} \right\|  \nonumber \\
    & \leq  \left\|\nabla f(\x_{k+1})- \nabla f(\x_k) - \nabla^2 f(\x_k)(x_{k+1}-x_k) \right\| + \left\|\frac{1}{|\mathcal{M}|}\sum_{i \in \mathcal{M}}(\g_{i,k}-\nabla f(\x_k) ) \right\|  \nonumber\\ & \qquad +  \left\|  \nabla^2 f(\x_k)(x_{k+1}-x_k) -\gamma\frac{1}{|\mathcal{M}|}\sum_{i \in \mathcal{M}}\bH_{i,k}\s_{i,k+1} \right\| + \left\| \frac{1}{|\mathcal{M}|}\sum_{i \in \mathcal{M}} \frac{M\gamma^2}{2}\|\s_{i,k+1}\|\s_{i,k+1} \right\| \nonumber \\
     & \leq \frac{L_2\eta_k^2}{2}\left\|\frac{1}{|\mathcal{U}|}\sum_{i \in \mathcal{U}}Q(\s_{i,k+1} )\right\|^2 + \epsilon_g + \frac{M\gamma^2}{2}\frac{1}{|\mathcal{M}|}\sum_{i \in \mathcal{M}} \|\s_{i,k+1}\|^2  \nonumber \\ 
     & +  \left\| \frac{\eta_k}{|\mathcal{U}|} \sum_{i \in \mathcal{U}}\nabla^2 f(\x_k)Q(\s_{i,k+1}) - \frac{\gamma}{|\mathcal{M}|} \sum_{i \in \mathcal{M}} \bH_{i,k}\s_{i,k+1} \right\| \label{eq:T3g1}
\end{align}

Now consider the term in \eqref{eq:T3g1}
\begin{align}
   & \left\| \frac{\eta_k}{|\mathcal{U}|} \sum_{i \in \mathcal{U}}\nabla^2 f(\x_k)Q(\s_{i,k+1}) - \frac{\gamma}{|\mathcal{M}|} \sum_{i \in \mathcal{M}} \bH_{i,k}\s_{i,k+1} \right\| \nonumber \\
\leq & \left\| \frac{\eta_k}{|\mathcal{U}|} \left[\sum_{i \in \mathcal{M}}\nabla^2 f(\x_k)Q(\s_{i,k+1}) - \sum_{i \in \mathcal{M}\cap \mathcal{T}}\nabla^2 f(\x_k)Q(\s_{i,k+1}) + \sum_{i \in \mathcal{B}\cap \mathcal{U}}\nabla^2 f(\x_k)Q(\s_{i,k+1}) \right] - \frac{\gamma}{|\mathcal{M}|} \sum_{i \in \mathcal{M}} \bH_{i,k}\s_{i,k+1} \right\| \nonumber \\ 
\leq & \left\| \frac{\eta_k}{|\mathcal{U}|}\sum_{i \in \mathcal{M}}\nabla^2 f(\x_k)Q(\s_{i,k+1})   - \frac{\gamma}{|\mathcal{M}|} \sum_{i \in \mathcal{M}} \bH_{i,k}\s_{i,k+1} \right\| +  \frac{\eta_k}{|\mathcal{U}|}\sum_{i \in \mathcal{M}\cap \mathcal{T}}\|\nabla^2 f(\x_k)Q(\s_{i,k+1})\| \nonumber \\& +  \frac{\eta_k}{|\mathcal{U}|}\sum_{i \in \mathcal{B}\cap \mathcal{U}}\|\nabla^2 f(\x_k)Q(\s_{i,k+1})\| \nonumber \\
\leq & \left\| \frac{\eta_k}{|\mathcal{U}|}\sum_{i \in \mathcal{M}}\nabla^2 f(\x_k)Q(\s_{i,k+1})   - \frac{\gamma}{|\mathcal{M}|} \sum_{i \in \mathcal{M}} \nabla^2 f(\x_k)Q(\s_{i,k+1}) \right\| \nonumber \\ &+ \left\| \frac{\gamma}{|\mathcal{M}|}\sum_{i \in \mathcal{M}}\nabla^2 f(\x_k)Q(\s_{i,k+1})   - \frac{\gamma}{|\mathcal{M}|} \sum_{i \in \mathcal{M}} \nabla^2 f(\x_k)\s_{i,k+1} \right\| \nonumber \\
& + \left\| \frac{\gamma}{|\mathcal{M}|}\sum_{i \in \mathcal{M}}\nabla^2 f(\x_k)\s_{i,k+1}   - \frac{\gamma}{|\mathcal{M}|} \sum_{i \in \mathcal{M}} \bH_{i,k}\s_{i,k+1} \right\| + \frac{\eta_k}{(1-\beta)m}(1+ \sqrt{1-\delta})L_1 [\sum_{i \in \mathcal{M}}\|\s_{i,k+1}\| + \alpha m \Gamma] \nonumber \\
\leq & (\frac{\eta_k}{(1-\beta)m}- \frac{\gamma}{(1-\alpha)m})L_1(1+ \sqrt{1-\delta})\sum_{i \in \mathcal{M}}\|\s_{i,k+1}\| + \frac{\gamma}{(1-\alpha)m}L_1\sqrt{1-\delta} \sum_{i \in \mathcal{M}}\|\s_{i,k+1}\| \nonumber \\
& + \frac{\gamma \epsilon_H}{(1-\alpha)m}\sum_{i \in \mathcal{M}}\|\s_{i,k+1}\| + \frac{\eta_k}{(1-\beta)m}(1+ \sqrt{1-\delta})L_1[\sum_{i \in \mathcal{M}}\|\s_{i,k+1}\| + \alpha m \Gamma ]\nonumber \\
& \leq \left( \frac{\eta_k}{(1-\beta)m}L_1(1+ \sqrt{1-\delta})(2+ \alpha m) -\frac{\gamma L_1}{(1-\alpha)m} \right)\sum_{i \in \mathcal{M}}\|\s_{i,k+1}\| + \frac{\gamma \epsilon_H}{(1-\alpha)m}\sum_{i \in \mathcal{M}}\|\s_{i,k+1}\| \nonumber \\
&= \left( \frac{(1-\alpha)}{(1-\beta)}2L_1(1+ \sqrt{1-\delta}) -\frac{\gamma}{\eta_k} (L_1- \epsilon_H)\right)\frac{1}{(1-\alpha)m} \sum_{i \in \mathcal{M}}\|\eta_k\s_{i,k+1}\| + \frac{\eta_k\alpha}{(1-\beta)}(1 + \sqrt{1-\delta})\Gamma \nonumber 
\end{align}

Next we consider the term 

\begin{align}
    & \frac{L_2\eta_k^2}{2}\left\|\frac{1}{|\mathcal{U}|}\sum_{i \in \mathcal{U}}Q(\s_{i,k+1} )\right\|^2 \nonumber \\
 \leq &     \frac{L_2(1+\sqrt{1-\delta})^2\eta_k^2}{2(1-\beta)m}\sum_{i \in \mathcal{U}}\|\s_{i,k+1} \|^2 \nonumber \\
 \leq &     \frac{L_2(1+\sqrt{1-\delta})^2\eta_k^2}{2(1-\beta)m} \left[\sum_{i \in \mathcal{M}}\|\s_{i,k+1} \|^2 +\sum_{i \in \mathcal{U} \cap \mathcal{B}}\|\s_{i,k+1} \|^2 \right]\nonumber \\ 
 = & \frac{L_2(1+\sqrt{1-\delta})^2\eta_k^2}{2(1-\beta)m} \sum_{i \in \mathcal{M}}\|\s_{i,k+1} \|^2 + \frac{L_2\alpha(1+\sqrt{1-\delta})^2\eta_k^2}{2(1-\beta)}\Gamma^2
\end{align}

So finally we have 
\begin{align}
  &\left\| \nabla f(\x_{k+1}) \right\| \nonumber \\
 \leq  & \frac{L_2(1+\sqrt{1-\delta})^2\eta_k^2}{2(1-\beta)m} \sum_{i \in \mathcal{M}}\|\s_{i,k+1} \|^2 + \epsilon_g + \frac{M\gamma^2}{2(1-\alpha)m } \sum_{i \in \mathcal{M}}\|\s_{i,k+1} \|^2 \nonumber \\
&  +\left( \frac{(1-\alpha)}{(1-\beta)}2L_1(1+ \sqrt{1-\delta}) -\frac{\gamma}{\eta_k} (L_1- \epsilon_H)\right)\frac{1}{(1-\alpha)m} \sum_{i \in \mathcal{M}}\|\eta_k\s_{i,k+1}\|  \nonumber \\
& + \frac{L_2\alpha(1+\sqrt{1-\delta})^2\eta_k^2}{2(1-\beta)}\Gamma^2 + \frac{\eta_k\alpha}{(1-\beta)}(1 + \sqrt{1-\delta})\Gamma \nonumber 
\end{align}

Now we choose $\gamma>\frac{(1-\alpha)}{(1-\beta)}2L_1(1+ \sqrt{1-\delta}) \frac{\eta_k}{L_1-\epsilon_H}$.

\begin{align*}
    &\left\| \nabla f(\x_{k+1}) \right\| \\
  \leq  & \left[ \frac{L_2(1-\alpha)(1+\sqrt{1-\delta})^2}{2(1-\beta)} +\frac{M\gamma^2}{2\eta_k^2}\right]\frac{1}{(1-\alpha)m} \sum_{i \in \mathcal{M}}\|\eta_k\s_{i,k+1} \|^2 \\
  & + \frac{L_2\alpha(1+\sqrt{1-\delta})^2\eta_k^2}{2(1-\beta)}\Gamma^2 + \frac{\eta_k\alpha}{(1-\beta)}(1 + \sqrt{1-\delta})\Gamma +\epsilon_g \\
   \leq & \left[ \frac{L_2(1-\alpha)(1+\sqrt{1-\delta})^2}{2(1-\beta)} +\frac{M\gamma^2}{2\eta_k^2}\right]\left[\frac{1}{(1-\alpha)m} \sum_{i \in \mathcal{M}}\|\eta_k\s_{i,k+1} \|^3 \right]^{2/3} \\
  & + \frac{L_2\alpha(1+\sqrt{1-\delta})^2\eta_k^2}{2(1-\beta)}\Gamma^2 + \frac{\eta_k\alpha}{(1-\beta)}(1 + \sqrt{1-\delta})\Gamma +\epsilon_g
\end{align*}
At step $k=k_0$,
\begin{align*}
   & \left\| \nabla f(\x_{k_0+1}) \right\| \\
   &\leq \left[ \frac{L_2(1-\alpha)(1+\sqrt{1-\delta})^2}{2(1-\beta)} +\frac{M\gamma^2}{2\eta_k^2}\right]\left(\frac{\psi_{comp}}{T}\right)^{2/3} +\epsilon_g
   + \frac{L_2\alpha(1+\sqrt{1-\delta})^2\eta_k^2}{2(1-\beta)}\Gamma^2 + \frac{\eta_k\alpha}{(1-\beta)}(1 + \sqrt{1-\delta})\Gamma \\
   & \leq \frac{\chi_1}{T^{2/3}} +\epsilon_g + \text{Minor term of order } \mathcal{O}(\frac{1}{T}) 
\end{align*}

where $\chi_1= \left[ \frac{L_2(1-\alpha)(1+\sqrt{1-\delta})^2}{2(1-\beta)} +\frac{M\gamma^2}{2\eta_k^2}\right](\psi_{comp})^{2/3} $ and as $M = O(m)$, we have $\chi_1 = \mathcal{O}(m)$.

The Hessian bound is 
\begin{align}
 & \lambda_{\min} (\nabla^2 f(\x_{k+1}) ) \nonumber \\
 &= \frac{1}{(1-\alpha)m}\sum_{i \in \mathcal{M}} \lambda_{\min} \left[\nabla^2 f(\x_{k+1}) \right] \nonumber \\ 
  &= \frac{1}{(1-\alpha)m}\sum_{i \in \mathcal{M}} \lambda_{\min} \left[\bH_{i,k} - (\bH_{i,k}-\nabla^2 f(\x_{k+1}) )\right] \nonumber \\
  & \geq \frac{1}{(1-\alpha)m}\sum_{i \in \mathcal{M}} \left[\lambda_{\min}(\bH_{i,k}) - \|\bH_{i,k}-\nabla^2 f(\x_{k+1}) \|\right] \nonumber \\
  & \geq \frac{1}{(1-\alpha)m}\sum_{i \in \mathcal{M}}  \lambda_{\min}(\bH_{i,k}) - \frac{1}{(1-\alpha)m}\sum_{i \in \mathcal{M}}   \|\bH_{i,k}- \nabla^2 f(\x_{k+1}) \|   \nonumber \\
  & \geq \frac{1}{(1-\alpha)m}\sum_{i \in \mathcal{M}}  - \frac{M\gamma}{2}\|\s_{i,k+1}\| - \frac{1}{(1-\alpha)m}\sum_{i \in \mathcal{M}}  \|\bH_{i,k}- \nabla^2 f(\x_{k}) \| \nonumber \\ & \qquad  -\frac{1}{(1-\alpha)m}\sum_{i \in \mathcal{M}}   \|\nabla^2 f(\x_{k})- \nabla^2 f(\x_{k+1}) \| \nonumber  \\
  & \geq \frac{1}{(1-\alpha)m}\sum_{i \in \mathcal{M}}  - \frac{M\gamma}{2\eta_k}\|\eta_k\s_{i,k+1}\| - \epsilon_H - \frac{1}{(1-\alpha)m}\sum_{i \in \mathcal{M}}  L_2 \|\x_{k}- \x_{k+1} \| \nonumber  \\
  & \geq - \frac{M\gamma}{2\eta_k}\left[ \frac{1}{(1-\alpha)m}\sum_{i \in \mathcal{M}} \|\eta_k\s_{i,k+1}\|^3\right]^{1/3} -L_2 \|\x_{k}- \x_{k+1} \| -\epsilon_H \nonumber \\
  &\geq - \frac{M\gamma}{2\eta_k}\left[ \frac{1}{(1-\alpha)m}\sum_{i \in \mathcal{M}} \|\eta_k\s_{i,k+1}\|^3\right]^{1/3} -L_2 \left[\frac{(1+\sqrt{1-\delta})}{(1-\beta)m}\sum_{i\in \mathcal{U}}\|\eta_k\s_{i,k+1}\| \right] -\epsilon_H \nonumber \\
   &\geq - \frac{M\gamma}{2\eta_k}\left[ \frac{1}{(1-\alpha)m}\sum_{i \in \mathcal{M}} \|\eta_k\s_{i,k+1}\|^3\right]^{1/3} -L_2 \frac{(1+\sqrt{1-\delta})}{(1-\beta)m}\left[\sum_{i\in \mathcal{M}}\|\eta_k\s_{i,k+1}\| + \sum_{i\in \mathcal{B}\cap\mathcal{U}}\|\eta_k\s_{i,k+1}\|\right] -\epsilon_H \nonumber \\
   &\geq - \frac{M\gamma}{2\eta_k}\left[ \frac{1}{(1-\alpha)m}\sum_{i \in \mathcal{M}} \|\eta_k\s_{i,k+1}\|^3\right]^{1/3} -L_2 \frac{(1+\sqrt{1-\delta})(1-\alpha)}{(1-\beta)}\left[\frac{1}{(1-\alpha)m}\sum_{i\in \mathcal{M}}\|\eta_k\s_{i,k+1}\| \right] -\epsilon_H \nonumber\\ & -L_2 \frac{(1+\sqrt{1-\delta})\alpha}{(1-\beta)}\eta_k\Gamma
\end{align}
At $k=k_0$ we have 
\begin{align}
  &\lambda_{\min} (\nabla^2 f(\x_{k_0+1}) )\\ \geq &  -\frac{M\gamma}{2\eta_k}\left[ \frac{\psi_{comp}}{T}  \right]^{1/3} -L_2 \frac{(1+\sqrt{1-\delta})(1-\alpha)}{(1-\beta)}\left[ \frac{\psi_{comp}}{T}  \right]^{1/3}-\epsilon_H -L_2 \frac{(1+\sqrt{1-\delta})\alpha}{(1-\beta)}\eta_k\Gamma\nonumber \\
  \geq &-\left[\frac{M\gamma}{2\eta_k} + L_2 \frac{(1+\sqrt{1-\delta})(1-\alpha)}{(1-\beta)}\right]\psi^{1/3}_{comp} \left(\frac{1}{T} \right)^{1/3}-\epsilon_H -\text{ Minor term }\mathcal{O}(1/T)\nonumber \\
  \geq &  -\frac{\chi_2}{T^{1/3}} -\epsilon_H-\text{ Minor term }\mathcal{O}(1/T)
 \end{align}
 where $\chi_2 =\left[\frac{M\gamma}{2\eta_k} + L_2 \frac{(1+\sqrt{1-\delta})(1-\alpha)}{(1-\beta)}\right]\psi^{1/3}_{comp}  $ and we have $\chi_2= \mathcal{O}(m)$.

 We have the following parameters
 \begin{align}
  \psi_{comp} &=\left[ \frac{f(\x_{0}) -f^*}{\lambda_{comp}}+ \sum_{k=0}^{T-1}\frac{\lambda_{\Gamma}}{\lambda_{comp}}  \right]   \label{eq:pcomp} 
 \end{align}
\section{Additional Experiments}
In this section we provide additional experiments.  We choose the parameters $\lambda=1, M=10$, learning rate $\eta_k=1$, fraction of the Byzantine machines $\alpha=\{.1,.15,.2\}$ and $\beta= \alpha + \frac{2}{m}$. 

\paragraph*{Compressed and Byzantine:} In Figure ~\ref{fig:comprobustbyz1}, we plot the function value of the robust linear regression problem for 'Gaussian attack` and 'random label` attacks with compressed update for both `w8a' and `a9a' dataset.

\begin{figure}[h!]%
\centering
\subfigure[]{
\includegraphics[height = 3.5cm,width=3.5cm]{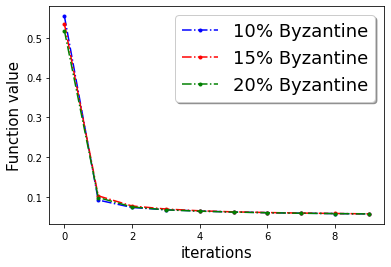}}%
\subfigure[]{
\includegraphics[height = 3.5cm,width=3.5cm]{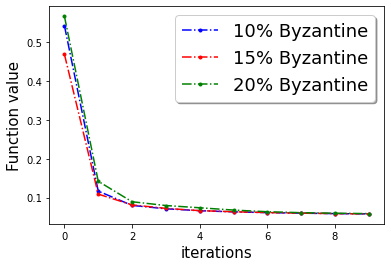}}%
\subfigure[]{
\includegraphics[height = 3.5cm,width=3.5cm]{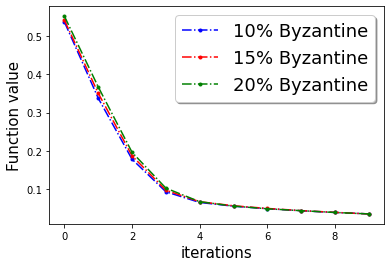}}%
\subfigure[]{
\includegraphics[height = 3.5cm,width=3.5cm]{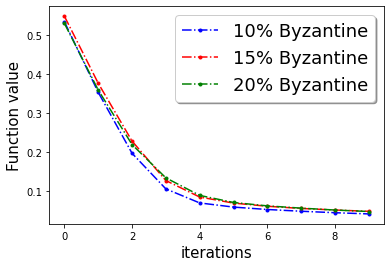}}%
\caption{Loss  of the training data `a9a' (first row) and `w8a' (second row) with $10\%,15\%,20\% $ Byzantine worker machines for (a,c). Gaussian attack.(b,d).Random attack.}%
    \label{fig:comprobustbyz1}%
    \vspace{-4mm}
\end{figure}

\begin{figure}[h]%
\centering
\subfigure[]{
\includegraphics[height = 3.5cm,width=3.5cm]{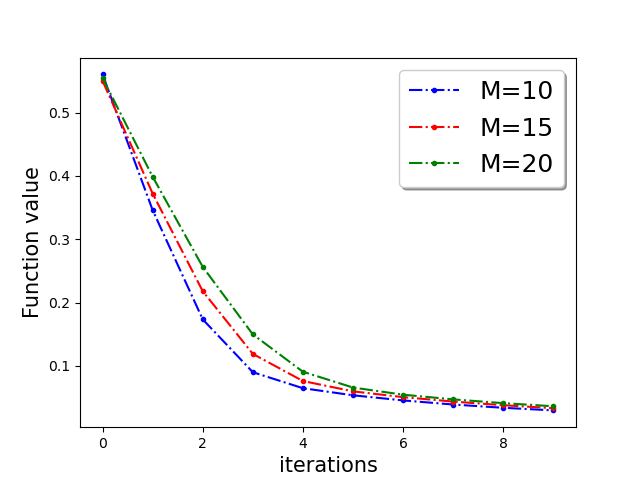}}%
\subfigure[]{
\includegraphics[height = 3.5cm,width=3.5cm]{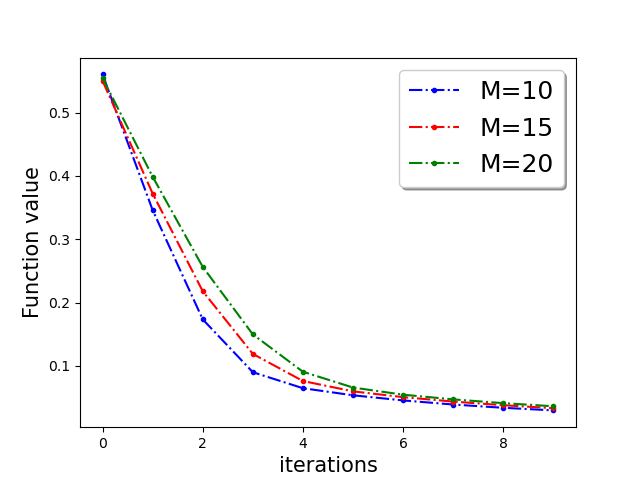}}%
\subfigure[]{
\includegraphics[height = 3.5cm,width=3.5cm]{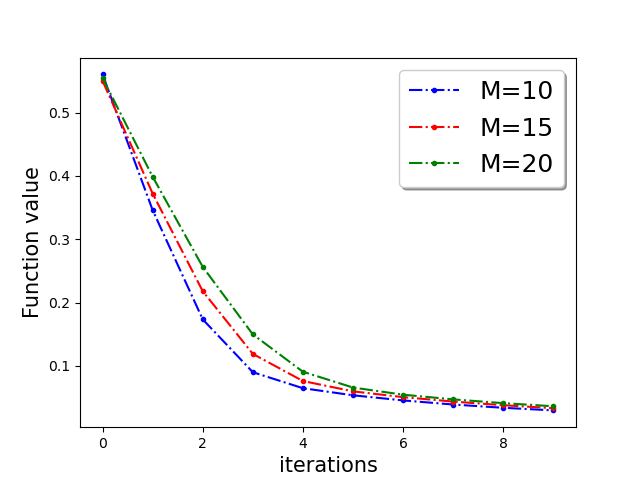}}%
\subfigure[]{
\includegraphics[height = 3.5cm,width=3.5cm]{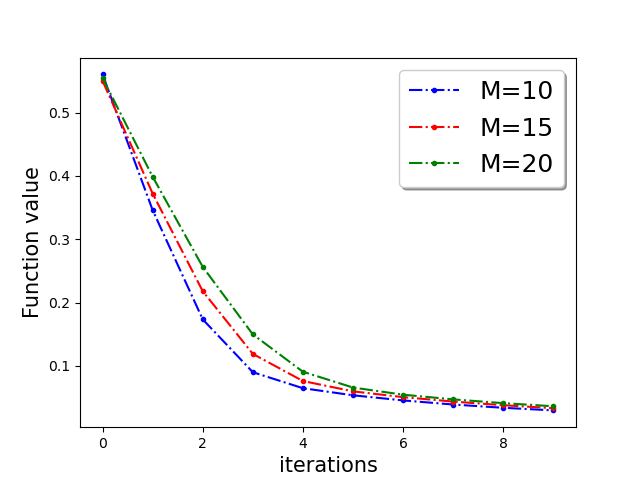}}%
\\
\subfigure[]{
\includegraphics[height = 3.5cm,width=3.5cm]{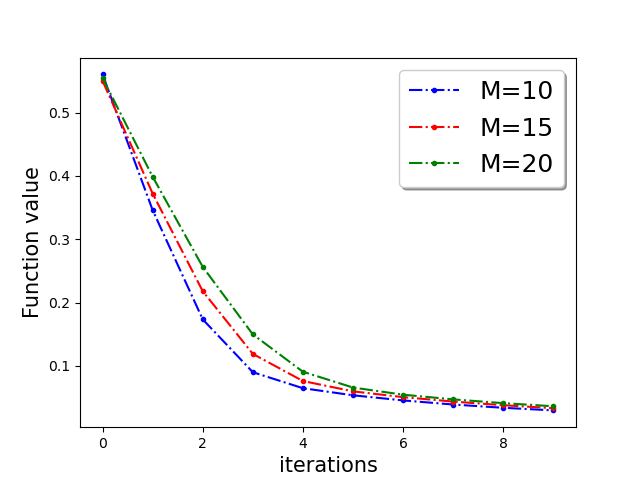}}%
\subfigure[]{
\includegraphics[height = 3.5cm,width=3.5cm]{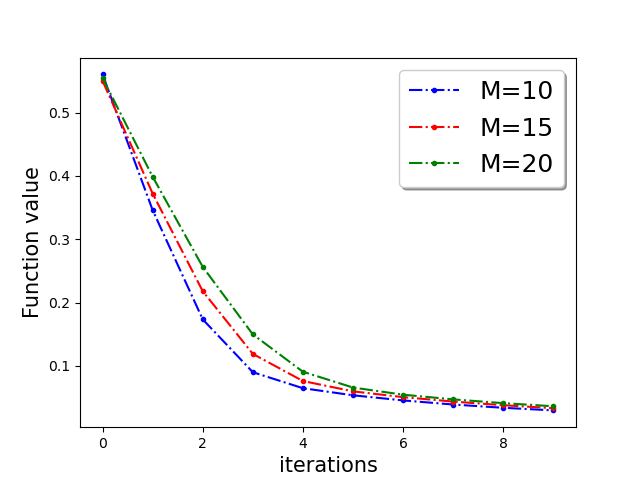}}%
\subfigure[]{
\includegraphics[height = 3.5cm,width=3.5cm]{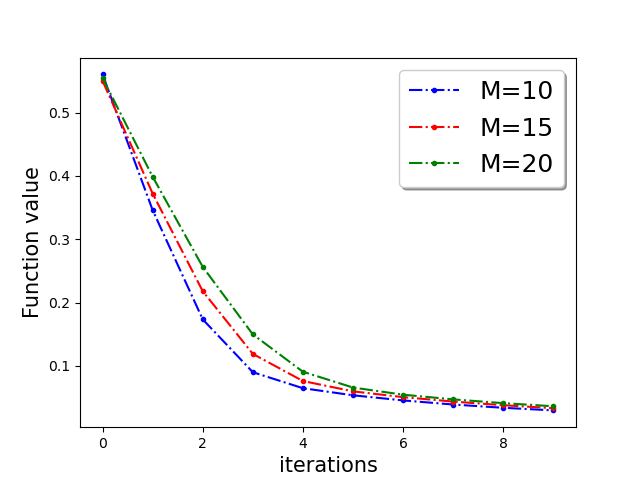}}%
\subfigure[]{
\includegraphics[height = 3.5cm,width=3.5cm]{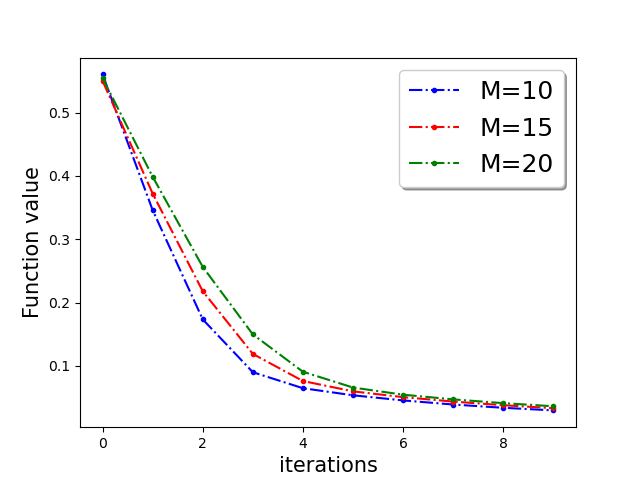}}%
\caption{Function loss  of the training data `a9a' dataset (first row) and `w8a' dataset (second row) with $10\%,15\%,20\% $ Byzantine worker machines for (a,e). Flipped label attack.(b,f). Negative Update attack (c,g). Gaussian noise attack and (d,h). Random label attack for non-convex robust linear regression problem.}%
    \label{fig:robustbyz2}%
    \vspace{-6mm}
\end{figure}

\begin{figure}[h]%
\centering
\subfigure[]{
\includegraphics[height = 3.5cm,width=3.5cm]{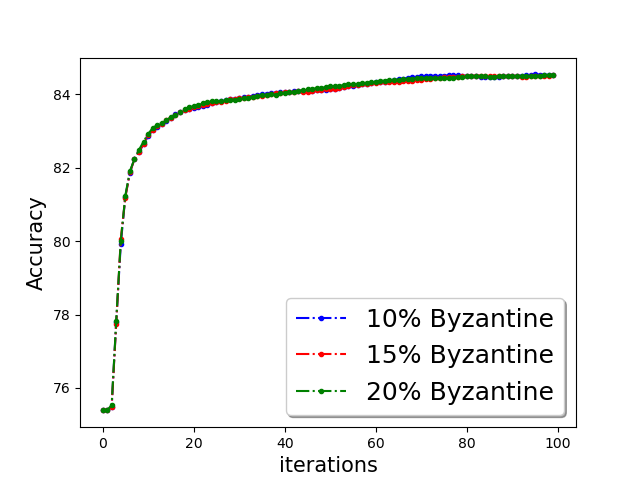}}%
\subfigure[]{
\includegraphics[height = 3.5cm,width=3.5cm]{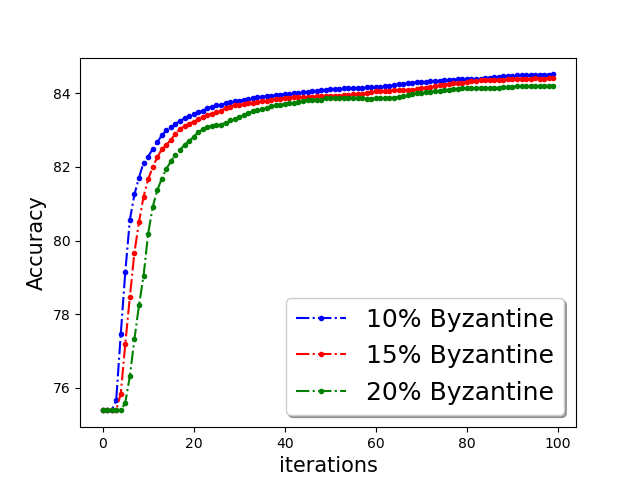}}%
\subfigure[]{
\includegraphics[height = 3.5cm,width=3.5cm]{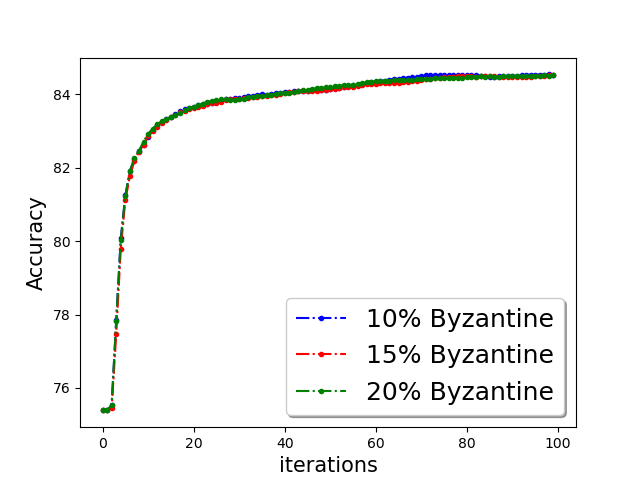}}%
\subfigure[]{
\includegraphics[height = 3.5cm,width=3.5cm]{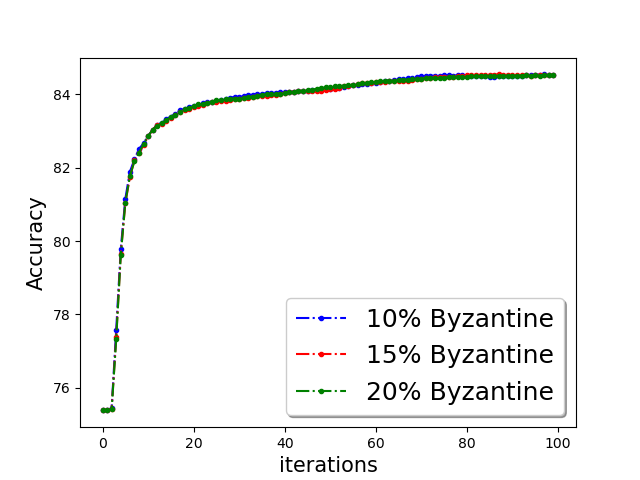}}%
\\
\subfigure[]{
\includegraphics[height = 3.5cm,width=3.5cm]{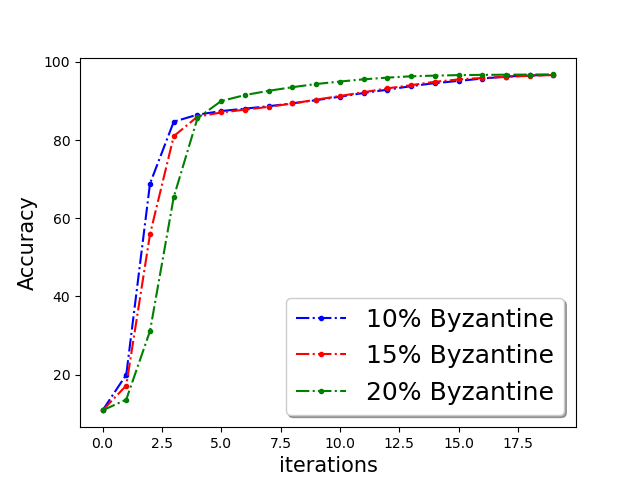}}%
\subfigure[]{
\includegraphics[height = 3.5cm,width=3.5cm]{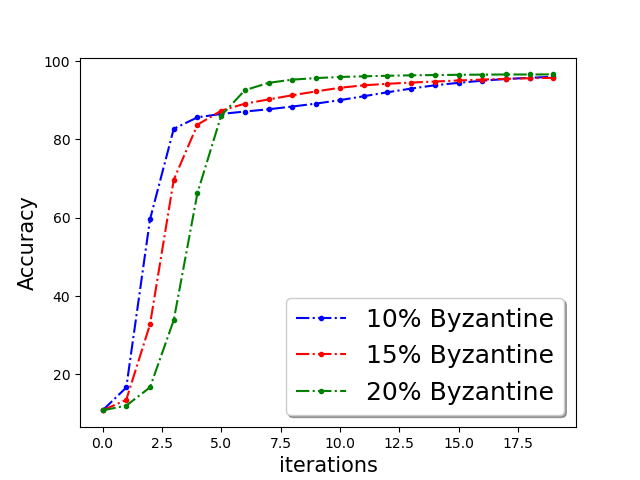}}%
\subfigure[]{
\includegraphics[height = 3.5cm,width=3.5cm]{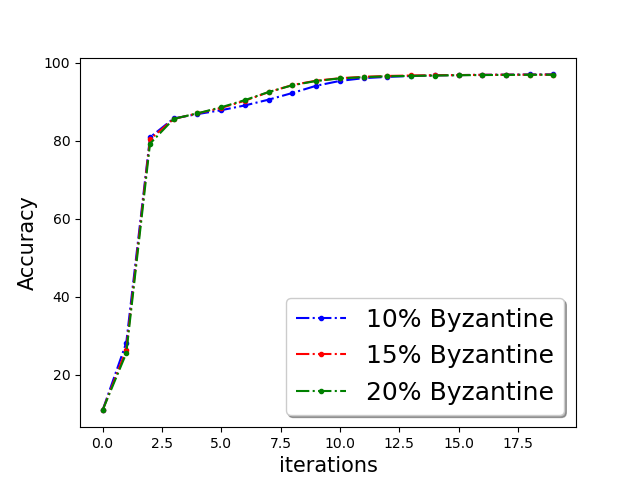}}%
\subfigure[]{
\includegraphics[height = 3.5cm,width=3.5cm]{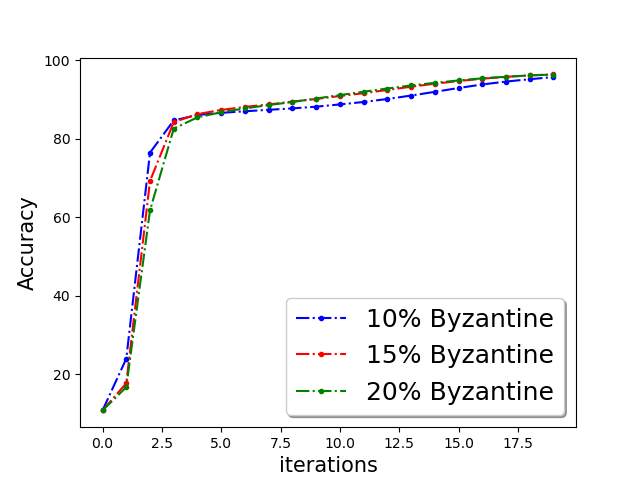}}%
\caption{Classification accuracy of the testing data `a9a' dataset (first row) and `w8a' dataset (second row) with $10\%,15\%,20\% $ Byzantine worker machines for (a,e). Flipped label.(b,f). Negative Update  (c,g). Gaussian noise and (d,h). Random label attack for logistic regression problem. }%
    \label{fig:logbyz}%
    \vspace{-2mm}
\end{figure}

\paragraph*{Training loss for uncompressed update:} In Figure ~\ref{fig:robustbyz2}, we plot the function value of the robust linear regression problem for all four  attacks update for both `w8a' and `a9a' dataset with uncompressed update $(\delta=1)$.

\paragraph*{Classification accuracy: } We show the classification accuracy on testing data of `a9a' and `w8a' dataset for logistic regression problem in Figure ~\ref{fig:logbyz} 
and   training function loss of `a9a' and `w8a' dataset for robust linear regression problem in the Figure~\ref{fig:logbyz}. It is evident from the plots that a simple \emph{norm based thresholding} makes the learning algorithm robust.
\end{document}